\numberwithin{equation}{section}
\newtheorem{Theorem}{Theorem}[section]
\newtheorem*{Theorem*}{Theorem}
\newtheorem{Corollary}[Theorem]{Corollary}
\newtheorem{Lemma}[Theorem]{Lemma}
\newtheorem{Proposition}[Theorem]{Proposition}
 { \theoremstyle{definition}
\newtheorem{Definition}[Theorem]{Definition}

\newtheorem{Example}[Theorem]{Example}
\newtheorem{Remark}[Theorem]{Remark} }
\begin{document}
\allowdisplaybreaks

\newcommand{\arXivNumber}{2209.01062}

\renewcommand{\PaperNumber}{092}

\FirstPageHeading

\ShortArticleName{Isomonodromic Deformations Along the Caustic of a Dubrovin--Frobenius Manifold}

\ArticleName{Isomonodromic Deformations Along the Caustic\\ of a Dubrovin--Frobenius Manifold}

\Author{Felipe REYES}

\AuthorNameForHeading{F.~Reyes}

\Address{SISSA, via Bonomea 265, Trieste, Italy}
\Email{\href{mailto:llopezre@sissa.it}{llopezre@sissa.it}}
\URLaddress{\url{https://lfelipe-lr.github.io/}}

\ArticleDates{Received May 03, 2023, in final form November 06, 2023; Published online November 16, 2023}

\Abstract{We study the family of ordinary differential equations associated to a Dubrovin--Frobenius manifold along its caustic. Upon just loosing an idempotent at the caustic and under a non-degeneracy condition, we write down a normal form for this family and prove that the corresponding fundamental matrix solutions are strongly isomonodromic. It is shown that the exponent of formal monodromy is related to the multiplication structure of the Dubrovin--Frobenius manifold along its caustic.}

\Keywords{Dubrovin--Frobenius manifolds; isomonodromic deformations; differential equations}

\Classification{53D45; 34M56}

\section{Introduction}
Dubrovin--Frobenius manifolds were invented by Boris Dubrovin to geometrize the study of certain $2D$ topological field theories~\cite{Dubrovin:1994hc,Dubrovin:1998fe}. The primary free energy $F$ of a family of such theories satisfies the so called {\it WDVV equations}. Given a quasi-homogeneous solution to these equations one constructs a Dubrovin--Frobenius manifold structure on the domain of definition~$M$ of the solution.

The first condition a Dubrovin--Frobenius manifold must satisfy is that the tangent sheaf $\mathcal{T}_{M}$ carries $\mathcal{O}_{M}$-bilinear multiplication $\circ\colon\mathcal{T}_{M}\times\mathcal{T}_{M}\rightarrow\mathcal{T}_{M}$, this multiplication is required to be unital, associative and commutative. The multiplication is required to satisfy an integrability condition (equation~(\ref{eq:Compatibility})); a manifold satisfying these conditions is called an \emph{$F$-manifold}. This integrability condition ensures that, the decomposition of each tangent space $T_{p}M$ into irreducible subalgebras extends to a decomposition of $M$ into irreducible $F$-manifolds. Next up in the definition comes the Euler vector field $E$, a global vector field required to satisfy $\mathcal{L}_{E}\circ=\circ$. Lastly one requires the existence of a flat metric $\eta$ satisfying some extra conditions (Definition~\ref{def:Dubrovin--FrobeniusManifold}).

As vector spaces, each tangent space $T_{p}M$ of a manifold is isomorphic to $\mathbb{C}^{n}$; on a Dubrovin--Frobenius manifold each tangent space is a $\mathbb{C}$-algebra and as a $\mathbb{C}$-algebra it may no longer be isomorphic to $\mathbb{C}^{n}$ (the multiplication in $\mathbb{C}^{n}$ is done entry by entry). If as algebras $T_{p}M\cong\mathbb{C}^{n}$ then the point $p$ is called semisimple. In this case, there exists a neighborhood $\bar{W}$ of $p$ such that all points in $\bar{W}$ are semisimple; in $\bar{W}$ there exist $n$-linearly independent vector fields $\bar{\pi}_{i}$ such that $\bar{\pi}_{i}\circ\bar{\pi}_{j}=\delta_{ij}\bar{\pi}_{i}$ and $[\bar{\pi}_{i},\bar{\pi}_{j}]=0$, these vectors are called \emph{orthogonal idempotents}. The points which are not semisimple form an hypersurface $K$ called the caustic which may be empty (see~\cite[Proposition 2.6]{HertlingFrobenius}). It is the purpose of this article to study the structure of a Dubrovin--Frobenius manifold in a neighborhood of a non semisimple point $p\in K$. In particular, we are interested in the restriction to the caustic of a family of differential equations associated to the Dubrovin--Frobenius manifold.

Closely related to a Dubrovin--Frobenius manifold is a family of meromorphic differential equations on $\mathbb{P}^{1}$ (on a small semisimple domain the structure of a Dubrovin--Frobenius manifold is equivalent to this family of equations, see~\cite{Dubrovin:1998fe}). Using the $\eta$-symmetric endomorphism $E\circ$ and the $\eta$-antisymmetric one $\mu:=\tfrac{2-d}{2}{\rm Id}-\nabla E$, for each point $p\in M$ the corresponding differential equation reads
\begin{equation*}
 \frac{{\rm d}y}{{\rm d}z}=\biggl(-(E\circ)_{p} +\frac{1}{z}\mu_{p}\biggr)y,
\end{equation*}
where $\nabla$ is the Levi-Civita connection of $\eta$. Each member of the family has a regular singularity at $z=0$ and an irregular at $z=\infty$. This family is induced by a vector bundle with a meromorphic flat connection over $\mathbb{P}^{1}\times M$ and as such one expects that the corresponding monodromy data are constant.

 At the regular singularity $z=0$, one gets that the Jordan form of the principal part (the endomorphism $\mu$) of the family of meromorphic differential equations does not depend on the point $p\in M$. As such, choosing a branch of the logarithm one writes a fundamental matrix solution for all $p\in M$ whose monodromy matrix is independent of $p\in M$.

 At the irregular singularity $z=\infty$, the normal form of the principal part (the endomorphism~$E\circ$) depends on the point $p\in M$, in particular, it is diagonal for all semisimple points and the basis that diagonalizes it, the orthogonal idempotents, ceases to exist in the caustic. In a neighborhood of a semisimple point, after choosing an ``admissible line'' in the $z$-plane, we can obtain holomorphic fundamental matrix solutions with prescribed asymptotic expansions in certain sectors and compute the Stokes matrices and the exponent of formal monodromy with respect to them; in a neighborhood of a semisimple point these data are also constant. Lastly, the central connection matrix relating the solutions found at $z=0$ and $z=\infty$ is also constant. In this way, in a neighborhood of a semisimple point, the family of meromorphic differential equations is strongly isomonodromic (i.e., the monodromy matrix, exponent of formal monodromy, Stokes matrices and the central connection matrix are constant~\cite{Dubrovin:1994hc,Dubrovin:1998fe}).

The fact that the monodromy data is constant on the semisimple domain depends only on the flatness and singularities of a connection on a vector bundle over $\mathbb{P}^{1}\times M$; the semisimplicity of the multiplication is nowhere used. Therefore, if one is able to define monodromy data for the non-semisimple points one expects it to be constant. It is the purpose of this paper to study this family of equations when only one idempotent is lost along the caustic, under one non-degeneracy assumption we define monodromy data and prove that it is constant.

Under our assumptions, when restricting to the caustic the principal part at $z=\infty$ will still be diagonalizable but it will have a repeated eigenvalue. Isomonodromic deformations of meromorphic differential equations on $\mathbb{P}^{1}$ with irregular singularities are studied in the seminal paper~\cite{a0b55245f91244e5bd4b591a56f5a1cb}. In their work, it is required that the principal parts at the irregular singularities have pairwise distinct eigenvalues. This condition is not satisfied by the system we are interested in. More recently, in~\cite{Guzzetti_2022} isomonodromic deformations of systems of the type we are interested are studied.

There are examples of Frobenius manifolds such that at the caustic more than one idempotent is lost (see~\cite{Basalaev_2014}). For our purpose, it is worth to have in mind the Frobenius manifolds coming from hypersurface singularities (see~\cite{1573950398998843136, Saito2008FROMPF}). In these examples it can be shown that at the caustic only one idempotent is lost.

\textbf{Main assumption and notations.} On the caustic $K$ the multiplication always has less than~$n$ orthogonal idempotents. Throughout this article we will work under the following assumptions:

\textbf{Assumption 1.} Generically along the caustic we have $n-1$ idempotents.

\textbf{Assumption 2.} The metric $\eta$ when pulled back to the caustic is non-degenerate so that this hypersurface has a well defined normal direction.

Under the first assumption for a point $p\in K$, we have that as algebras $T_{p}M\cong \mathbb{C}[z]/\bigl(z^{2}\bigr)\oplus\mathbb{C}^{n-2}$ correspondingly, the integrability condition~(\ref{eq:Compatibility}) tells us that, as germs of $F$-manifolds $(M,p)\cong F^{2}\times (A_{1})^{n-2}$ where $F^{2}$ is a two-dimensional $F$ manifold and the Euler vector field of $M$ decomposes as a sum of Euler vector fields on the corresponding manifolds (see~\cite[Theorem~2.11]{HertlingFrobenius}). The $F$-manifold $A_{1}$ is one-dimensional and we can choose a coordinate $u$ such that $e=\partial_{u}$ and $E=u\partial_{u}$. Germs of two-dimensional $F$-manifolds were classified in~\cite[Theorem~4.7]{HertlingFrobenius}. Each of these germs must be isomorphic to one of the $F$-manifolds $I_{2}(m)$ for some $m\in\mathbb{N}_{\geq 2}$. On these $F$-manifolds one can choose coordinates $(t,u_{2})$ such that $\pi_{2}:=\partial_{u_{2}}$ is the identity, $\partial_{t}\circ\partial_{t}=t^{m-2}\partial_{u_{2}}$ and the Euler vector field is
\begin{equation*}
 E=\tfrac{2}{m}t\partial_{t}+u_{2}\pi_{2}.
\end{equation*}
\indent Thus, under the first assumption, for a point $p\in K$ there exists a neighborhood $W$ of $p$ and a system of local coordinates $(t,u_{2},\dots,u_{n})$ on $W$ such that if we denote $\pi_{i}:=\partial_{u_{i}}$ for $i\geq 2$, then
\[
 \partial_{t}\circ\partial_{t}=t^{m-2}\pi_{2},\qquad
 \partial_{t}\circ\pi_{2}=\partial_{t},\qquad
 \pi_{2}\circ\pi_{2}=\pi_{2},
\]
and for $i,j\geq3$, we have
\[
 \pi_{2}\circ\pi_{i}=0,\qquad
 \partial_{t}\circ\pi_{i}=0,\qquad
 \pi_{i}\circ\pi_{j}=\delta_{ij}\pi_{i}.
\]
 On these coordinates, the caustic has the equation $t=0$ and the Euler vector field is
 \begin{equation*}
 E=\frac{2}{m}t\partial_{t}+\sum_{i=2}^{n}u_{i}\pi_{i}.
 \end{equation*}
 For a point $q\in W\setminus K$, there exists a neighborhood $\bar{W}$ of $q$ and canonical coordinates $(\bar{u}_{1},\dots,\bar{u}_{n})$ on $\bar{W}$ such that $\bar{\pi}_{i}\circ\bar{\pi}_{j}=\delta_{ij}\bar{\pi}_{i}$ where $\bar{\pi}_{i}:=\partial_{\bar{u}_{i}}$. The canonical coordinates $\bar{u}_{i}$ are the eigenvalues of the operator of multiplication by the Euler vector field hence, on the overlap $\bar{W}\cap W$ we have%
\begin{equation}
 \label{eq:ChangeCoordinates}
 \bar{u}_{1}=u_{2}+\tfrac{2}{m}t^{\frac{m}{2}},\qquad
 \bar{u}_{2}=u_{2}-\tfrac{2}{m}t^{\frac{m}{2}},\qquad
 \bar{u}_{i}=u_{i} \ \ \mathrm{for} \ i\geq 3.
\end{equation}
The corresponding basis of the tangent space are related by
\begin{equation}
 \label{eq:ChangeBasis}
 \partial_{t}=\big( \tfrac{m}{4}\big)^{\frac{m-2}{m}}(\bar{u}_{1}-\bar{u}_{2})^{\frac{m-2}{m}}(\bar{\pi}_{1}-\bar{\pi}_{2}),\qquad
 \pi_{2}=\bar{\pi}_{1}+\bar{\pi}_{2},\qquad
 \pi_{i}=\bar{\pi}_{i} \ \ \mathrm{for}\ i\geq 3.
\end{equation}
From now on, we will denote by $u_{i}$ and $\pi_{i}$ for $i\geq 3$ the coordinates and tangent vectors defined on $\bar{W}\cup W$.

\textbf{Organization.} In Section~\ref{sec:FManifolds}, we recall the definition of an $F$-manifold and of a Dubrovin--Frobenius manifold. We prove that under the main assumption, the caustic $K$ of a massive (generically semisimple) $F$-manifold $M$ is always a massive $F$-manifold (Proposition~\ref{prop:F-Caustic}). This result was previously known (see~\cite[Theorem~2.10 and Example 2.5]{Strachan_2004}) but we provide a different proof. Moreover, if $M$ has an Euler vector field then it is tangent to the caustic. This geometric fact has the consequence that multiplication by the Euler vector field $E$ is diagonalizable along the caustic; but the basis that diagonalizes it outside the caustic does not coincide with the one that diagonalizes it inside the caustic. As a corollary, we obtain that the caustic, with the induced structures, satisfies all the axioms of a Dubrovin--Frobenius manifold except for the flatness and in particular, the caustic of a three-dimensional Dubrovin--Frobenius manifold is always a Dubrovin--Frobenius manifold.

In Section~\ref{sec:DeformedConnection}, we recall the definition of the deformed connection of a Dubrovin--Frobenius manifold and of the associated family of meromorphic ordinary differential equations on $\mathbb{P}^{1}$ that it determines. We then pullback this family to the caustic and, if the metric $\eta$ is non-degenerate along the caustic, we write down this family in a convenient basis.

In Section~\ref{sec:Monodromy}, we recall the monodromy data at $z=0$. As shown in~\cite{Dubrovin:1998fe}, these data does not depend on the point $p\in M$.

Outside the caustic the exponent of formal monodromy is identically zero, but this is no longer true inside the caustic. In Section~\ref{sec:ExponentOfFormalMonodromy}, we write down formal solutions for the family of differential equations, compute the exponent of formal monodromy, show that it is constant and that it is related to the decomposition $(M,p)\cong I_{2}(m)\times (A_{1})^{n-2}$ (Proposition~\ref{prop:ExponentMultiplication}).

In Section~\ref{sec:StokesMatrices}, we invoke Sibuya's theorem to find holomorphic solutions having the asymptotics of the formal solutions found in the previous section. Using them, we define the Stokes matrices and show that they are constant. We also prove that the connection matrix relating the particular solutions found at $z=0$ and $z=\infty$ is constant. Combining these facts, we obtain that the pullback of the family of differential equation is strongly isomonodromic (Theorem~\ref{thm:Isomonodromy}).

 Section~\ref{sec:Examples} provides some three-dimensional examples.

It is natural to ask if Proposition~\ref{prop:ExponentMultiplication} and Theorem~\ref{thm:Isomonodromy} can be generalized when one looses more than one idempotent. As we will see, the classification of germs of $2$-dimensional $F$-manifolds will play a crucial role. Recently significant progress has been made in the classification of germs of $3$-dimensional $F$-manifolds; but as seen in~\cite{Basalaev_2021} this classification is much more vast and complicated than the two-dimensional case.

\section[The caustic of an F-manifold]{The caustic of an $\boldsymbol{F}$-manifold}
\label{sec:FManifolds}

In this section, we prove that under the main assumption the caustic of a massive $F$-manifold is again a massive $F$-manifold. If the starting $F$-manifold was a Dubrovin--Frobenius manifold then, with the induced structures, the caustic satisfies all the axioms of a Dubrovin--Frobenius manifold except for the flatness of the metric. For $3$-dimensional Dubrovin--Frobenius manifolds, the caustic is always a Dubrovin--Frobenius manifold.
\begin{Definition}
 An \emph{$F$-manifold} is a triple $(M,\circ,e)$ where $M$ is a complex manifold of dimension~$n$, $\circ\colon\mathcal{T}_{M}\otimes\mathcal{T}_{M}\rightarrow\mathcal{T}_{M}$ is a commutative, associative and $\mathcal{O}_{M}$-bilinear multiplication, $e$ is a~global unit field for $\circ$ and the multiplication satisfies for any two local vector fields $X$, $Y$
 \begin{equation}
 \label{eq:Compatibility}
 \mathcal{L}_{X\circ Y} ( \circ ) = X\circ\mathcal{L}_{ Y} ( \circ ) + Y\circ\mathcal{L}_{X} ( \circ ).
 \end{equation}
 An Euler vector field for the $F$-manifold $M$ is a global vector field $E$ such that
 \begin{equation}
 \label{eq:Euler}
 \mathcal{L}_{E}( \circ )= \circ.
 \end{equation}
 A point $p\in M$ is called \emph{semisimple} if the tangent space $T_{p}M$ has no nilpotents. In this case, it can be shown that $T_{p}M$ decomposes as a sum of one-dimensional algebras $\sum_{i=1}^{n}\mathbb{C}\cdot \bar{\pi}_{i}$ where the vectors $\bar{\pi}_{i}$ satisfy $\bar{\pi}_{i}\circ\bar{\pi}_{j}=\delta_{ij}\bar{\pi}_{i}$ and $[\bar{\pi}_{i},\bar{\pi}_{j}]=0$; these vectors are called \emph{orthogonal idempotents}. The \emph{caustic} $K\subset M$ is the set of points which are not semisimple. In~\cite[Proposition~2.6]{HertlingFrobenius}, it is shown that the caustic is either empty or an hypersurface.
 Here we prove the following proposition (see~\cite{Strachan_2004}).
\end{Definition}
\begin{Proposition}
 \label{prop:F-Caustic}
 Let $(M,\circ,e)$ be a massive $F$-manifold of dimension $n$ and let $K\neq\varnothing$ be its caustic. Denote by $i\colon K\rightarrow M$ the inclusion. Suppose there exists a codimension~$1$ subvariety $($in $K)$ $\tilde{K}\subset K$ such that there exist $n-1$ vector fields $\pi_{2},\dots,\pi_{n}\in\Gamma\bigl(K\setminus\tilde{K}, i^{*}\mathcal{T}_{M}\bigr)$ such that $\pi_{i}\circ\pi_{j}=\delta_{ij}\pi_{i}$. Then the caustic $K$ is a massive $F$-manifold of dimension $n-1$ and the vectors~$\pi_{i}$ are tangent to it. Moreover, if $E$ is an Euler vector field for the $F$-manifold~$M$ then~$E$ is tangent to $K$ and it is an Euler vector field for it. The endomorphism $i^{*}E\circ\colon i^{*}\mathcal{T}_{M}\rightarrow i^{*}\mathcal{T}_{M}$ is diagonalizable along $K\setminus\tilde{K}$.
\end{Proposition}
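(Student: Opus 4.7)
The plan is to reduce everything to explicit computations in the local coordinates $(t,u_2,\dots,u_n)$ that have been set up just before the proposition using Hertling's decomposition theorem (Theorem~2.11 of~\cite{HertlingFrobenius}) and the classification of two-dimensional $F$-manifolds. In a neighborhood of a point $p\in K\setminus\tilde K$, these coordinates realize $K$ as the smooth hypersurface $\{t=0\}$, identify the given idempotents with $\pi_i=\partial_{u_i}$ for $i\geq 2$, fix the multiplication table
\[
\partial_t\circ\partial_t=t^{m-2}\pi_2,\qquad \partial_t\circ\pi_2=\partial_t,\qquad \partial_t\circ\pi_i=0\ (i\geq 3),\qquad \pi_i\circ\pi_j=\delta_{ij}\pi_i\ (i,j\geq 2),
\]
and give $E=\tfrac{2}{m}t\partial_t+\sum_{i=2}^n u_i\pi_i$. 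All four claims will then follow by inspection of this data.

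For the claim that $K$ is a massive $F$-manifold with the $\pi_i$ tangent to it, the fields $\pi_i=\partial_{u_i}$ are visibly tangent to $\{t=0\}$ and hence descend to vector fields on $K$, and the identity $e=\pi_2+\dots+\pi_n$ lies in $TK$. Since $\pi_i\circ\pi_j=\delta_{ij}\pi_i\in TK$, the subbundle $TK\subset i^*\mathcal T_M$ is closed under $\circ$ on $K\setminus\tilde K$; being a closed condition, it is closed on all of $K$. Because the $\pi_i$ are orthogonal idempotents satisfying $[\pi_i,\pi_j]=0$, the multiplication on $K$ is generically semisimple, these are canonical coordinates on $K$, and in them the integrability condition~\eqref{eq:Compatibility} is automatic.

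For the Euler statements, setting $t=0$ in $E$ kills the $\partial_t$-component, so $E|_K=\sum_{i=2}^n u_i\pi_i$ is tangent to $K$. In the canonical coordinates $(u_2,\dots,u_n)$ on $K$, $E|_K=\sum u_i\partial_{u_i}$ and $[E|_K,\pi_i]=-\pi_i$; plugging this into
\[
(\mathcal L_{E|_K}\circ)(\pi_i,\pi_j)=\mathcal L_{E|_K}(\pi_i\circ\pi_j)-[E|_K,\pi_i]\circ\pi_j-\pi_i\circ[E|_K,\pi_j]
\]
gives $\delta_{ij}\pi_i=\pi_i\circ\pi_j$, proving $\mathcal L_{E|_K}(\circ|_K)=\circ|_K$. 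For the diagonalizability of $i^*E\circ$, apply the product table directly:
\[
E|_K\circ\partial_t=u_2\,\partial_t,\qquad E|_K\circ\pi_2=u_2\,\pi_2,\qquad E|_K\circ\pi_i=u_i\,\pi_i\ \ (i\geq 3),
\]
so in the frame $(\partial_t,\pi_2,\pi_3,\dots,\pi_n)$ of $i^*\mathcal T_M$ the endomorphism $i^*E\circ$ is diagonal with eigenvalues $(u_2,u_2,u_3,\dots,u_n)$.

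The main technical point is justifying the coordinate picture, which rests on Hertling's local decomposition $T_pM\cong\mathbb C[z]/\bigl(z^2\bigr)\oplus\mathbb C^{n-2}$ at generic $p\in K$ together with the classification of germs of two-dimensional $F$-manifolds, both invoked in the preamble to the proposition. Once these are granted, the proof is entirely computational; the only genuinely invariant content is the observation that the $u_2$-eigenspace of $i^*E\circ$ is two-dimensional, spanned by the normal direction $\partial_t$ and the tangent idempotent $\pi_2$, a fact that will later govern the isomonodromic deformation problem at the caustic.
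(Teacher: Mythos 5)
Your proposal is correct and follows essentially the same route as the paper: both invoke Hertling's decomposition theorem together with the classification of two-dimensional germs of $F$-manifolds to obtain the coordinates $(t,u_{2},\dots,u_{n})$ with the multiplication table and Euler field $E=\tfrac{2}{m}t\partial_{t}+\sum_{i}u_{i}\pi_{i}$, and then read off tangency, the Euler property on $K$, and the diagonalization of $i^{*}E\circ$ in the frame $\partial_{t},\pi_{2},\dots,\pi_{n}$ with eigenvalues $u_{2}$ (multiplicity two) and $u_{3},\dots,u_{n}$. Your write-up is merely more explicit than the paper's in verifying the $F$-manifold axioms and $\mathcal{L}_{E|_{K}}(\circ)=\circ$ on the caustic, which the paper leaves implicit.
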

\begin{proof}
 The existence of $n-1$ orthogonal idempotents $\pi_{i}$ tells us that as an algebra $T_{p}M$ decomposes as $V\oplus\bigl(\bigoplus_{i=3}^{n}\mathbb{C}\cdot\pi_{i}\bigr)$, where $V$ is a $2$-dimensional algebra and~$\pi_{2}$ is the unit on~$V$. We will use two results of~\cite{HertlingFrobenius}. According to~\cite[Theorem~2.11]{HertlingFrobenius}, the above decomposition extends to a decomposition of the germ of the $F$-manifold $M$ at the point $p$ and the Euler vector field decomposes as a sum of Euler vector fields of the corresponding $F$-manifolds. In this case, the decomposition is $F^{2}\times\bigl(\prod_{i=1}^{n-2}A_{1}\bigr) $ and the Euler vector field decomposes as
 \begin{equation*}
 E= v+\sum_{i=3}^{n}u_{i}\pi_{i},
 \end{equation*}
 with $v\in V$. The second result we will use is the classification of two-dimensional germs of $F$-manifolds (see~\cite[Example~2.12\,(iv) and Theorem~4.7\,(a)]{HertlingFrobenius}).
 There it is proven that $F^{2}$ must be isomorphic to one of the germs $I_{2}(m)$. These germs admit local coordinates $(t,u_{2})$ (here we use a different notation from the one on~\cite{HertlingFrobenius}) such that $\partial_{u_{2}}=\pi_{2}$ is the identity on $\mathcal{T}_{I_{2}(m)}$, $\partial_{t}\circ\partial_{t}=t^{m-2}\pi_{2}$ and the Euler vector field is $v=\tfrac{2}{m}t\partial_{t}+u_{2}\pi_{2}$. Since on the hypersurface $t=0$ the vector $\partial_{t}$ is nilpotent this hypersurface is contained in $K\setminus\tilde{K}$. The vectors tangent to this hypersurface are $\pi_{2},\dots,\pi_{n}$. The Euler vector field is
 \begin{equation*}
 E=\frac{2}{m}t\partial_{t}+\sum_{i=2}^{n}u_{i}\pi_{i}.
 \end{equation*}
 Along the caustic, the basis $\partial_{t},\pi_{2},\dots,\pi_{n}$ of $i^{*}\mathcal{T}_{M}$ diagonalizes the endomorphism $E\circ$ and the eigenvalues are $u_{3},\dots,u_{n}$ of multiplicity one and $u_{2}$ which has multiplicity two.
\end{proof}

\begin{Definition}
 \label{def:Dubrovin--FrobeniusManifold}
 A \emph{Dubrovin--Frobenius manifold} is a tuple $(M,\circ,e,E,\eta)$, where $(M,\circ,e)$ is an $F$-manifold with Euler vector field $E$ and $\eta$ is a metric on $M$ satisfying
 \begin{enumerate}\itemsep=0pt
 \item For any vector fields $X$, $Y$, $Z$ we have $\eta(X\circ Y,Z)=\eta(X,Y\circ Z)$.
 \item The unit $e$ is flat, namely $\nabla e=0$ where $\nabla$ is the Levi-Civita connection of $\eta$.
 \item The Euler vector field satisfies $\mathcal{L}_{E}\eta=(2-d)\eta$.
 \item The metric $\eta$ is flat.
 \end{enumerate}
 \end{Definition}
\begin{Corollary}
 Let $(M,\circ,e,E,\eta)$ be a Dubrovin--Frobenius manifold and suppose that the caustic $K$ satisfies the hypothesis of Proposition~$\ref{prop:F-Caustic}$. Then $(K,\circ,e,E,i^{*}\eta)$ satisfies all the axioms of Dubrovin--Frobenius manifold except possibly for the flatness of $i^{*}\eta$. Moreover, if $M$ is $3$-dimensional then $(K,\circ,e,E,i^{*}\eta)$ is a Dubrovin--Frobenius manifold.
\end{Corollary}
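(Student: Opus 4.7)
The plan is to deduce all of the Dubrovin--Frobenius axioms for $(K,\circ,e,E,i^*\eta)$ except possibly flatness directly from the corresponding axioms on $M$, and then to upgrade this to honest flatness when $\dim M=3$ by means of the Gauss equation for the hypersurface $K\subset M$.

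Proposition~\ref{prop:F-Caustic} already supplies the $F$-manifold structure on $K$, the Euler vector field $E$ tangent to $K$, and tangentiality of the unit $e=\pi_2+\sum_{i\geq 3}\pi_i$. The Frobenius compatibility $\eta(X\circ Y,Z)=\eta(X,Y\circ Z)$ is a tensor identity on $M$, and since the $F$-manifold multiplication closes on vector fields tangent to $K$ (again by Proposition~\ref{prop:F-Caustic}), restriction yields the analogous identity for $i^*\eta$. Similarly, $\mathcal{L}_E\eta=(2-d)\eta$ pulls back along $i$ because $E$ is tangent to $K$. For the flatness of the unit on $K$, the plan is to invoke the Gauss formula $\nabla_XY=\nabla^K_XY+\mathrm{I\!I}(X,Y)$, available for $X,Y$ tangent to $K$ once Assumption~2 ensures that $i^*\eta$ is non-degenerate; here $\nabla$ and $\nabla^K$ are the Levi-Civita connections of $\eta$ and $i^*\eta$ and $\mathrm{I\!I}$ is the second fundamental form. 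Setting $Y=e$ and using $\nabla e=0$ yields simultaneously $\nabla^K_X e=0$ (the desired flatness of the unit on $K$) and $\mathrm{I\!I}(X,e)=0$ for every tangential $X$.

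For the second claim, assume $\dim M=3$, so $K$ is a surface, and let $h$ denote the scalar second fundamental form with respect to a local unit normal $N$. The identity $\mathrm{I\!I}(\cdot,e)=0$ obtained above says that the nowhere-vanishing vector $e$ lies in $\ker h$; on the two-dimensional tangent space $T_pK$ this forces $h$ to have rank at most one and in particular $\det h=0$. The hypersurface Gauss equation takes the form
\[
R(X,Y,Z,W)=R^K(X,Y,Z,W)+\eta(N,N)\bigl[h(X,Z)h(Y,W)-h(X,W)h(Y,Z)\bigr],
\]
and flatness of $\eta$ on $M$ kills the left-hand side. The bracket on the right-hand side is, up to sign, $\det h$ (in any basis of $T_pK$), which vanishes, so $R^K=0$ and $i^*\eta$ is flat.

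No serious obstacle is expected: the whole argument is a translation of the ambient $F$-manifold and metric data along the inclusion $i$. The only delicate step, and the reason the flatness statement is restricted to dimension three, is that a single kernel vector of $\mathrm{I\!I}$ suffices to annihilate $\det h$ only when $K$ is two-dimensional; in higher dimensions the same kernel argument would merely force $\mathrm{I\!I}$ to be degenerate, and the bracket in the Gauss equation could remain non-zero.
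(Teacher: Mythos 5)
Your argument is correct, and for the part that actually requires proof --- flatness of $i^{*}\eta$ when $\dim M=3$ --- it takes a genuinely different route from the paper. The paper argues intrinsically: from $\nabla e=0$ it deduces $\tilde{\nabla}e=0$ for the Levi-Civita connection $\tilde{\nabla}$ of $g=i^{*}\eta$ (the same tangential projection you use), then chooses $\partial_{1}=e$ and $\partial_{2}$ with $[\partial_{1},\partial_{2}]=0$, notes that $\mathcal{L}_{e}g=0$ makes the metric components and hence the Christoffel symbols constant along $e$, and checks by hand that both curvature components $\bigl[\tilde{\nabla}_{\partial_{1}},\tilde{\nabla}_{\partial_{2}}\bigr]\partial_{i}$ vanish; in effect it proves that a two-dimensional metric carrying a nowhere-vanishing parallel vector field is flat, with no reference to the ambient manifold beyond producing that parallel field. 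You instead work extrinsically: $\nabla e=0$ gives $\mathrm{I\!I}(\cdot,e)=0$, so the scalar second fundamental form $h$ is degenerate on the two-dimensional $T_{p}K$, and the Gauss equation with flat ambient curvature reduces $R^{K}$ to a multiple of $\det h=0$. Both proofs hinge on the same two inputs (the parallel unit and $\dim K=2$), and both implicitly use Assumption~2 (non-degeneracy of $i^{*}\eta$), without which neither $\tilde{\nabla}$ nor the unit normal $N$ exists; it would be worth flagging that this is an additional hypothesis beyond those of Proposition~\ref{prop:F-Caustic} as literally quoted in the corollary. Your version is arguably more transparent about why the statement is confined to dimension three (a rank-one kernel of $h$ forces $\det h=0$ only on a surface) and yields the extra geometric fact that $K$ is totally geodesic in the $e$-direction; the paper's version is more elementary in that it avoids setting up the second fundamental form and the Gauss equation altogether.
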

\begin{proof}
 The only thing that needs to be proven is the statement about the $3$-dimensional Dubrovin--Frobenius manifold. Let $g=i^{*}\eta$ and let $\tilde{\nabla}$ denote the Levi-Civita connection of~$g$. By hypothesis, $\nabla e=0$ so projecting to the tangent space of the caustic gives $\tilde{\nabla} e=0$. Using this, we get $\mathcal{L}_{e}g=\tilde{\nabla}_{e}g=0$. Call $\partial_{1}=e$ and pick a vector field $\partial_{2}$ such that $[\partial_{1},\partial_{2}]=0$. Then $\mathcal{L}_{e}g=0$ implies $\partial_{1}g_{ij}=0$ so that the components of the metric in this basis are constant in the direction of the unit vector field. Since the Christoffel symbols are functions of the metric and its derivatives, they are also constant along the unit vector field. Now $\big[\tilde{\nabla}_{\partial_{1}},\tilde{\nabla}_{\partial_{2}}\big]\partial_{1}=0$ because $\tilde{\nabla}_{e}=0$. Finally,
 \begin{equation*}
 \big[\tilde{\nabla}_{\partial_{1}},\tilde{\nabla}_{\partial_{2}}\big]\partial_{2}=\Gamma_{22}^{1}
 \nabla_{\partial_{1}}\partial_{1}+\Gamma_{22}^{2}\nabla_{\partial_{1}}\partial_{2}=0.\tag*{\qed}
 \end{equation*}\renewcommand{\qed}{}
 \end{proof}

\begin{Example}
 Let us consider the Dubrovin--Frobenius manifold $M$ associated with the singularity $A_{n}$. This manifold consists of the polynomials of the form
 \begin{equation*}
 F(a;z)=z^{n+1}+a_{n-1}z^{n-1}+\cdots + a_{1}z + a_{0},
 \end{equation*}
 where $a=(a_{0},\dots,a_{n-1})\in\mathbb{C}^{n}$. This manifold is an affine space modeled on the vector space of polynomials of degree at most $n-1$. This means that we can identify the tangent space to any point $a\in M$ with the space of polynomials of degree at most $n-1$. Given two polynomials $f,g\in T_{a}M$ the multiplication is defined by
 \begin{equation*}
 f\circ g:= fg \mod \frac{\partial F}{\partial z}\bigg|_{a}.
 \end{equation*}
 If we write $\frac{\partial F}{\partial z}=(n+1)\prod_{i=1}^{n}(z-\alpha_{i})$, then one can easily check that the polynomials
 \begin{equation*}
 e_{i}:=\frac{1}{z-\alpha_{i}}\frac{\partial F}{\partial z}
 \end{equation*}
satisfy $e_{i}\circ e_{j}=\delta_{ij}\lambda_{i}e_{i}$ with $\lambda_{i}=e_{i}(\alpha_{i})$ and therefore they are multiples of the orthogonal idempotents. Hence the caustic $K$ consist of the points $a$ such that the polynomial $\frac{\partial F}{\partial z}$ has a~double root. The set of points where $\frac{\partial F}{\partial z}$ has only a double root and all other roots simple is an open set inside the caustic. In this open set the polynomials $e_{i}$, with $\alpha_{i}$ a simple root, still are multiples of the orthogonal idempotents $\pi_{i}$; we have $n-2$ of them, say $\pi_{3},\dots,\pi_{n}$. But we have another orthogonal idempotent given by $e-\pi_{3}-\cdots-\pi_{n}$. By Proposition~\ref{prop:F-Caustic}, the caustic is a massive $F$-manifold. Note that we can apply the proposition again, indeed, the caustic contains the locus of points $\tilde{K}$ such that the polynomial $\tfrac{\partial F}{\partial z}$ has a triple root and all the other roots simple. The same argument as before shows that along $\tilde{K}$ we have $n-2$ orthogonal idempotents. Continuing in this way, we arrive at a $2$-dimensional $F$-manifold, the locus of points where $\tfrac{\partial F}{\partial z}$ has a root of multiplicity $n-1$ and a simple root. By the corollary, this surface is a Dubrovin--Frobenius manifold. Dubrovin--Frobenius surfaces are classified by their charge. Since the charge of the $A_{n}$ Dubrovin--Frobenius manifold is $d=\frac{n-1}{n+1}$ (see~\cite{Dubrovin:1993nt}), the corresponding surface is isomorphic to $I_{2}(n+1)$.
\end{Example}
On the future, we will use the following statement.

\begin{Lemma}
 \label{prop:Closedness}
 Let $(M,\circ,e,E,\eta)$ be a Dubrovin--Frobenius manifold. Then the $1$-form $\eta(e,-)$ is closed.
\end{Lemma}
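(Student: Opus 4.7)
The plan is to exploit the two facts about $e$ relative to the Levi-Civita connection $\nabla$ of $\eta$: that $e$ is $\nabla$-parallel (axiom (2) of a Dubrovin--Frobenius manifold), and that $\nabla$ is both torsion-free and metric-compatible. The natural candidate calculation is to express $d\omega$ for $\omega := \eta(e,-)$ in terms of $\nabla$ and reduce everything to $\nabla e=0$.

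Concretely, I would compute, for arbitrary local vector fields $X,Y$,
\begin{equation*}
(d\omega)(X,Y) = X\cdot\eta(e,Y) - Y\cdot\eta(e,X) - \eta(e,[X,Y]).
\end{equation*}
Compatibility of $\nabla$ with $\eta$ converts the first two terms into $\eta(\nabla_X e,Y)+\eta(e,\nabla_X Y)$ and $\eta(\nabla_Y e,X)+\eta(e,\nabla_Y X)$. The axiom $\nabla e=0$ kills the $\nabla e$ contributions, and the torsion-free identity $\nabla_X Y-\nabla_Y X=[X,Y]$ cancels what remains against the Lie bracket term. Hence $d\omega=0$.

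Equivalently and more conceptually, since $\nabla e=0$ and $\nabla$ is a metric connection, the metric dual $\omega=\eta(e,-)=e^{\flat}$ also satisfies $\nabla\omega=0$; any parallel $1$-form for a torsion-free connection is automatically closed because $d\omega(X,Y)=(\nabla_X\omega)(Y)-(\nabla_Y\omega)(X)$. I would phrase the proof in this second, one-line form, with the computation above serving as the verification.

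There is no real obstacle here: the statement is essentially a formal consequence of axiom (2) in Definition~\ref{def:Dubrovin--FrobeniusManifold} together with the standard properties of the Levi-Civita connection. No use of the multiplication $\circ$, of $E$, or of the Frobenius compatibility $\eta(X\circ Y,Z)=\eta(X,Y\circ Z)$ is needed. The flatness of $\eta$ is likewise irrelevant; only metric compatibility and the vanishing of torsion are used.
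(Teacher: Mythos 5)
Your argument is correct and is essentially identical to the paper's proof: both expand $d(\eta(e,-))(X,Y)$ via the standard formula, use torsion-freeness to replace $[X,Y]$ by $\nabla_X Y-\nabla_Y X$, apply metric compatibility, and conclude from $\nabla e=0$. The reformulation as ``a parallel $1$-form for a torsion-free connection is closed'' is a clean way to package the same computation.
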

\begin{proof}
 By torsion freeness of $\nabla$, we have
 \begin{align*}
 {\rm d}(\eta(e,-))(u,v)&=u\eta(e,v)-v\eta(e,u)-\eta(e,[u,v])\\
 &=u\eta(e,v)-v\eta(e,u)-\eta(e,\nabla_{u}v-\nabla_{v}u).
 \end{align*}
 By compatibility of the metric, we get
 \begin{equation*}
 {\rm d}(\eta(e,-))(u,v)=\eta(\nabla_{u}e,v)-\eta(u,\nabla_{v}e).
 \end{equation*}
 So by flatness of $e$, we get the result.
\end{proof}

\section{The deformed connection}
\label{sec:DeformedConnection}
In this section, we describe the deformed connection of a Dubrovin--Frobenius manifold $M$. This consists of a family of connections parametrized by $z\in\mathbb{C}$ and at $z=0$, we recover the Levi-Civita connection of $\eta$. Thanks to the properties of a Dubrovin--Frobenius manifold, for any $z\in\mathbb{C}$ the corresponding connection is flat. Moreover, this connection can be extended to a flat connection with singularities on a vector bundle over $\mathbb{P}^{1}\times M$. By considering only the derivative in direction of $\mathbb{P}^{1}$, every Dubrovin--Frobenius manifold determines a family of ordinary differential equations on $\mathbb{P}^{1}$, this family is parametrized by the points of the Dubrovin--Frobenius manifold. By pulling this connection to $\mathbb{P}\times K$, we obtain a new family of ordinary differential equations parametrized by the caustic.

\begin{Definition}
 Let $(M,\circ,e,E,\eta)$ be a Dubrovin--Frobenius manifold, let $\nabla$ be the Levi-Civita connection of $\eta$ and let $z$ be a global coordinate on $\mathbb{C}$. The \emph{deformed connection} is a $1$-parameter family of connections on $\mathcal{T}_{M}$. For $z\in\mathbb{C}$, it is defined as
 \begin{equation*}
 \bar{\nabla}:=\nabla + z\circ.
 \end{equation*}
\end{Definition}
Thanks to the commutativity of $\circ$, the deformed connection is torsionless. Moreover, the flatness of $\nabla$, properties~(1) and~(2) of Definition~\ref{def:Dubrovin--FrobeniusManifold} and the associativity of $\circ$, imply that the deformed connection is flat for every~$z\in\mathbb{C}$.

Consider the projections $\pi\colon\mathbb{P}^{1}\times M\rightarrow M$ and $\pi_{1}\colon\mathbb{P}^{1}\times M\rightarrow\mathbb{P}^{1}$. We now extend the deformed connection to a connection with singularities on the bundle $\pi^{*}\mathcal{T}_{M}$. First note that thanks to property~(4) of Definition~\ref{def:Dubrovin--FrobeniusManifold}, the endomorphism of $\mathcal{T}_{M}$ defined by
\begin{equation*}
 \mu:=\frac{2-d}{2}-\nabla E
\end{equation*}
is $\eta$-antisymmetric. The $\mathcal{O}_{M}$-linear tensors $\circ$ and $\mu$ on $\mathcal{T}_{M}$ induce $\mathcal{O}_{\mathbb{P}^{1}\times M}$-linear tensors on $\pi^{*}\mathcal{T}_{M}$. Abusing notation we will denote them by the same symbols. Recall that $\mathcal{T}_{\mathbb{P}^{1}\times M}\cong \pi^{*}_{1}\mathcal{T}_{\mathbb{P}^{1}}\oplus \pi^{*}\mathcal{T}_{M}$. The connection $\bar{\nabla}$ on $\pi^{*}\mathcal{T}_{M}$ is defined as
\begin{equation*}
 \bar{\nabla}_{u}v:=\pi^{*}\left(\bar{\nabla}\right)_{u}v =\nabla_{u}v+zu\circ v
 \end{equation*}
 and
 \begin{equation*}
 \bar{\nabla}_{\partial_{z}}v:=\frac{\partial v}{\partial z}+E\circ v-\frac{1}{z}\mu v,
 \end{equation*}
where $u,v\in\pi^{*}\mathcal{T}_{ M}$ and $\partial_{z}$ is the vector field associated with the global coordinate $z$. The equality~(\ref{eq:Euler}) guarantees that $\bar{\nabla}$ is a flat connection on $\pi^{*}\mathcal{T}_{M}$. This means that for any point $(z,p)\in\mathbb{P}^{1}\times M$ we can find $n$ linearly independent sections $v_{i}\in\pi^{*}\mathcal{T}_{M}$ that satisfy $\bar{\nabla}v_{i}=0$. In particular, fixing a basis and putting the components of these $n$ sections as columns of a~matrix~$Y$, we get that, for each $p\in M$, $Y$ satisfies the ordinary differential equation
\begin{equation}
 \label{eq:TheDifferentialEquation}
 \frac{{\rm d}Y}{{\rm d}z}=\biggl(\frac{1}{z}\mu - E\circ \biggr)Y.
 \end{equation}
 Now take a semisimple point $q\in M$ and let
 \begin{equation*}
 \bar{f}_{i}:=\frac{\bar{\pi}_{i}}{|\bar{\pi}_{i}|}, \qquad i=1,\dots,n,
 \end{equation*}
 denote the normalized orthogonal idempotents on $M\setminus K$, where the length $|\cdot|$ is computed using the metric $\eta$. We have that $E\circ \bar{f}_{i}=\bar{u}_{i}\bar{f}_{i}$ so that the matrix $\bar{U}$ representing $E\circ$ is diagonal and the matrix $\bar{V}$ representing $\mu$ is antisymmetric. In this basis, the system~(\ref{eq:TheDifferentialEquation}) is written as
 \begin{equation*}
 \frac{{\rm d}Y}{{\rm d}z}=\biggl( \frac{1}{z}\bar{V}-\bar{U}\biggr)Y.
 \end{equation*}
 In~\cite{Dubrovin:1994hc}, it is shown that around a semisimple point $q\in M$ such that $E\circ$ has different eigenvalues the family of ordinary differential equations~(\ref{eq:TheDifferentialEquation}) is isomonodromic. Moreover, in~\cite{MR4094756} this result was extended to points where $E\circ$ has repeated eigenvalues but the multiplication remains semisimple. In this article, we show that under the conditions of Proposition~\ref{prop:F-Caustic} the family~(\ref{eq:TheDifferentialEquation}) remains isomonodromic when pulled back to the caustic.

 Consider the inclusion $j\colon\mathbb{P}^{1}\times K\rightarrow\mathbb{P}^{1}\times M$ ($j={\rm id}_{\mathbb{P}^{1}}\times i$) and the vector bundle $(\pi\circ j)^{*}\mathcal{T}_{M}$ over $\mathbb{P}^{1}\times K$. This vector bundle has a flat connection $j^{*}\bar{\nabla}$. Moreover, if the metric $i^{*}\eta$ is non-degenerate on every point of the caustic $K$ we can find a unitary normal vector $N$ to $K$. Consider the vectors $\partial_{t},\pi_{2},\dots,\pi_{n}$ provided by Hertling's decomposition (see equalities~(\ref{eq:ChangeCoordinates}) and~(\ref{eq:ChangeBasis})); by the compatibility of the multiplication $\circ$ and the metric $\eta$ we must have that $N$ is a linear combination of the vectors $\partial_{t}$ and $\pi_{2}$. As shown in Proposition~\ref{prop:F-Caustic}, in the subspace generated by these two vectors $E\circ$ acts by multiplication by $u_{2}$. Therefore, on the basis
 \begin{equation*}
 N, \quad f_{2}:=\frac{\pi_{2}}{|\pi_{2}|},\quad f_{i}:=\bar{f_{i}}, \quad i=3,\dots,n,
 \end{equation*}
 the matrix $U$ representing $E\circ$ is diagonal and the matrix $V$ representing $\mu$ is antisymmetric. System~(\ref{eq:TheDifferentialEquation}) takes the form
 \begin{equation*}
 \frac{{\rm d}Y}{{\rm d}z}=\biggl( \frac{1}{z}V-U\biggr)Y.
 \end{equation*}
 Hence we see that pulling back the family~(\ref{eq:TheDifferentialEquation}) to the caustic, we get a family of the same kind but with one parameter less and the operator $E\circ$ has an eigenvalue of multiplicity $2$, namely $u_{2}$.

 For later use, let us write down the connection matrices of the flat connection $j^{*}\bar{\nabla}$ on the vector bundle $(\pi\circ j)^{*}\mathcal{T}_{M}$ over $\mathbb{P}^{1}\times K$. We will use the frame $\partial_{z},\pi_{2},\dots,\pi_{n}$ of $\mathcal{T}_{\mathbb{P}^{1}\times K}$ and the frame $N,f_{2},\dots,f_{n}$ of $(\pi\circ j)^{*}\mathcal{T}_{M}$. By the above discussion, the $z$-component is
 \begin{equation*}
 \omega_{z}=U-\frac{1}{z}V.
 \end{equation*}
 Now $\bar{\nabla}_{\pi_{2}}=\nabla_{\pi_{2}}+z\pi_{2}\circ$ and since $\pi_{2}\circ$ is the identity on the subspace generated by $N$ and $\pi_{2}$, and zero on the subspace generated by $f_{3},\dots,f_{n}$, we have
 \begin{equation*}
 \bar{\omega}_{2}=\omega_{2}+zE_{2},
 \end{equation*}
 where $\omega_{2}$ is the connection matrix of the flat connection $i^{*}\nabla$ and $E_{2}$ has a $2\times 2$ identity matrix on the highest leftmost block and all the other entries are zero, i.e.,
 \begin{equation*}
 (E_{2})^{\alpha}_{\beta}=\delta^{\alpha}_{1}\delta^{1}_{\beta}+\delta^{\alpha}_{2}\delta^{2}_{\beta}.
 \end{equation*}
 Analogously, for $i>2$ we have
 \begin{equation*}
 \bar{\omega}_{i}=\omega_{i}+zE_{i},
 \end{equation*}
 where the matrices $\omega_{i}$ are the connection matrices of the flat connection $i^{*}\nabla$ and $(E_{i})^{\alpha}_{\beta}=\delta^{\alpha}_{i}\delta^{i}_{\beta}$. Notice that since the first two eigenvalues of the matrix $U$ are $u_{2}$ we have ${\rm d}U=\sum_{i=2}^{n}E_{i}{\rm d}u_{i}$ and hence the connection form of the connection $j^{*}\bar{\nabla}$ can be written as
 \begin{equation}
 \label{eq:ConnectionForm}
 \bar{\omega}=z{\rm d}U+\sum_{i=2}^{n}\omega_{i}{\rm d}u_{i}.
 \end{equation}
 The following lemma will be useful for some computations.
 \begin{Lemma}
 We have the following identities:
 \begin{align}
 & [E_{i},\omega_{j}]=[E_{j},\omega_{i}],\label{eq:UsefulIdentities1} \\
 & \frac{\partial V}{\partial u_{i}}=[V,\omega_{i}], \label{eq:UsefulIdentities2} \\
 & \left[U,\omega_{i}\right]=-[E_{i},V]. \label{eq:UsefulIdentities3}
 \end{align}
 \end{Lemma}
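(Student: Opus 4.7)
The plan is to extract all three identities from the single statement that the deformed connection $j^{*}\bar{\nabla}$ on $(\pi\circ j)^{*}\mathcal{T}_{M}$ is flat, which amounts to the Maurer--Cartan equation $d\bar{\omega}+\bar{\omega}\wedge\bar{\omega}=0$ for the connection $1$-form $\bar{\omega}=z\,dU+\sum_{i=2}^{n}\omega_{i}\,du_{i}$, together with the component $\bar{\omega}_{z}=U-V/z$ in the $z$-direction coming from $\bar{\nabla}_{\partial_{z}}$. Two elementary observations will be used repeatedly: the matrices $E_{i}$ are constant (their entries are Kronecker deltas), and both $E_{i}$ and $U$ are diagonal in the chosen frame $N,f_{2},\dots,f_{n}$, so $[E_{i},E_{j}]=0$ and $[U,E_{i}]=0$. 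A third useful fact is that $dU=\sum_{i=2}^{n}E_{i}\,du_{i}$, which is precisely what encodes the repeated eigenvalue $u_{2}$ appearing twice along $K$.

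To obtain identity~(\ref{eq:UsefulIdentities1}), I look at the $(u_{i},u_{j})$-component of flatness, namely
\begin{equation*}
\partial_{i}\bar{\omega}_{j}-\partial_{j}\bar{\omega}_{i}+[\bar{\omega}_{i},\bar{\omega}_{j}]=0.
\end{equation*}
Substituting $\bar{\omega}_{i}=\omega_{i}+zE_{i}$, using that $\partial_{i}E_{j}=0$ and $[E_{i},E_{j}]=0$, the equation splits according to powers of $z$. The $z^{0}$-piece recovers the (already known) flatness of $i^{*}\nabla$, while the $z^{1}$-piece reads
\begin{equation*}
[\omega_{i},E_{j}]+[E_{i},\omega_{j}]=0,
\end{equation*}
which is exactly~(\ref{eq:UsefulIdentities1}) after transposing one bracket.

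To obtain~(\ref{eq:UsefulIdentities2}) and~(\ref{eq:UsefulIdentities3}), I look at the mixed $(z,u_{i})$-component of flatness,
\begin{equation*}
\partial_{z}\bar{\omega}_{i}-\partial_{i}\bar{\omega}_{z}+[\bar{\omega}_{z},\bar{\omega}_{i}]=0.
\end{equation*}
Here $\partial_{z}\bar{\omega}_{i}=E_{i}$, $\partial_{i}\bar{\omega}_{z}=E_{i}-(1/z)\,\partial_{i}V$ (using $\partial_{i}U=E_{i}$), so the ``linear'' part collapses to $(1/z)\,\partial_{i}V$. Expanding the commutator $[U-V/z,\,\omega_{i}+zE_{i}]$ and using $[U,E_{i}]=0$ produces three surviving terms at orders $z^{-1}$, $z^{0}$ and $z^{1}$. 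Multiplying through by $z$ and matching coefficients, the $z^{0}$-coefficient gives $\partial_{i}V=[V,\omega_{i}]$, i.e.~(\ref{eq:UsefulIdentities2}), and the $z^{1}$-coefficient gives $[U,\omega_{i}]=[V,E_{i}]=-[E_{i},V]$, i.e.~(\ref{eq:UsefulIdentities3}).

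There is no real obstacle beyond keeping the signs and the $z$-expansion bookkeeping straight; the only conceptual input is that the repeated eigenvalue is compatible with $dU=\sum E_{i}\,du_{i}$ in the very basis in which $U$ is diagonal and $V$ is antisymmetric, so that the commutator $[U,E_{i}]$ vanishes and the identities decouple cleanly power-by-power in $z$.
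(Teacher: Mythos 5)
Your proposal is correct and follows essentially the same route as the paper: both derive all three identities from the flatness of the pulled-back deformed connection, reading off the first identity from the $(u_i,u_j)$-component of the curvature and the other two from the $(z,u_i)$-component. You simply make explicit the $z$-expansion bookkeeping that the paper leaves to the reader (aside from a harmless slip in naming the orders of the three surviving commutator terms, the coefficient matching is right).
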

 \begin{proof}
 Since the connection $\bar{\nabla}$ on $\pi^{*}\mathcal{T}_{M}$ is flat, the connection $j^{*}\nabla$ on $(\pi\circ j)^{*}\mathcal{T}_{M}$ is also flat. Flatness of $j^{*}\nabla$ in the plane generated by $\partial_{i}$ and $\pi_{j}$ gives~(\ref{eq:UsefulIdentities1}) and flatness in the plane generated by $\partial_{z}$ and $\pi_{i}$ gives~(\ref{eq:UsefulIdentities2}) and~(\ref{eq:UsefulIdentities3}).
 \end{proof}

 Recall that for any point of the caustic $(M,p)\cong I_{2}(m)\times(A_{1})^{n-2}$. We now introduce a~connection on the subbundle $i^{*}\mathcal{T}_{I_{2}(m)}$ of $i^{*}\mathcal{T}_{M}$. This bundle is generated by the vectors~$N$,~$f_{2}$. The following connection will be useful when studying isomonodromic deformations of equation~(\ref{eq:TheDifferentialEquation}), it is defined as
 \begin{equation*}
 ({\rm id}\otimes\pi_{2}\circ)i^{*}\nabla\colon \ \Omega^{0}_{K}\otimes i^{*}\mathcal{T}_{I_{2(m)}}\rightarrow\Omega^{1}_{K}\otimes i^{*}\mathcal{T}_{I_{2(m)}},
 \end{equation*}
 where ${\rm id}$ is the identity on $\Omega^{1}_{K}$. Since $\pi_{2}\circ$ is the identity on the subbundle generated by the vectors $N$, $f_{2}$, the past expression does define a connection. Indeed, $\mathbb{C}$-linearity is clear and if $h\in\mathcal{O}_{K}$ is a holomorphic function and $v\in \langle N,f_{2}\rangle$, we have
 \begin{align*}
 ({\rm id}\otimes\pi_{2}\circ)i^{*}\nabla hv& =({\rm id}\otimes\pi_{2}\circ)({\rm d}h\otimes v+hi^{*}\nabla v)\\
 & ={\rm d}h\otimes\pi_{2}\circ v+h({\rm id}\otimes\pi_{2}\circ)i^{*}\nabla v\\
 & ={\rm d}h\otimes v+h({\rm id}\otimes\pi_{2}\circ)i^{*}\nabla v
 \end{align*}
 because $({\rm id}\otimes\pi_{2}\circ)$ is $\mathcal{O}_{K}$-linear. Let us compute the connection matrices of this connection. We have
 \begin{align*}
 & i^{*}\nabla N=\sum_{i=2}^{n}\Biggl((\omega_{i})_{1}^{1}{\rm d}u_{i}\otimes N+\sum_{s=2}^{n}(\omega_{i})^{s}_{1}{\rm d}u_{i}\otimes f_{s}\Biggr),\\
 & i^{*}\nabla f_{2}=\sum_{i=2}^{n}\Biggl((\omega_{i})_{2}^{1}{\rm d}u_{i}\otimes N+\sum_{s=2}^{n}(\omega_{i})^{s}_{2}{\rm d}u_{i}\otimes f_{s}\Biggr),
 \end{align*}
 so that
 \begin{align*}
 & ({\rm id}\otimes\pi_{2}\circ)i^{*}\nabla N=\sum_{i=2}^{n}(\omega_{i})^{1}_{1}{\rm d}u_{i}\otimes N+(\omega_{i})^{2}_{1}{\rm d}u_{i}\otimes f_{2},\\
 & ({\rm id}\otimes\pi_{2}\circ)i^{*}\nabla f_{2}=\sum_{i=2}^{n}(\omega_{i})^{1}_{2}{\rm d}u_{i}\otimes N+(\omega_{i})^{2}_{2}{\rm d}u_{i}\otimes f_{2}.
 \end{align*}
 Therefore, for $i=2,\dots,n$,
 \begin{equation*}
 \begin{aligned}
 ({\rm id}\otimes\pi_{2}\circ)i^{*}\nabla_{\pi_{i}}N&=(\omega_{i})^{1}_{1}N+(\omega_{i})^{2}_{1}f_{2},\\
 ({\rm id}\otimes\pi_{2}\circ)i^{*}\nabla_{\pi_{i}}f_{2}&=(\omega_{i})^{1}_{2}N+(\omega_{i})^{2}_{2}f_{2}.
 \end{aligned}
 \end{equation*}
 Hence the connection matrices $_{2}\omega_{i}$ of the connection $({\rm id}\otimes\pi_{2}\circ)i^{*}\nabla$ on $i^{*}\mathcal{T}_{I_{2}(m)}$ are
 \begin{equation*}
 _{2}\omega_{i}=E_{2}\omega_{i}E_{2}.
 \end{equation*}
 In other words, the connection matrices $_{2}\omega_{i}$ of the connection $({\rm id}\otimes\pi_{2}\circ)i^{*}\nabla$ are the highest leftmost $2\times 2$ block of the connection matrices $\omega_{i}$ of the connection $i^{*}\nabla$ on the vector bundle~$i^{*}\mathcal{T}_{M}$.
 \begin{Proposition}
 \label{prop:FlatnessDiagonalConnection}
 The connection $({\rm id}\otimes\pi_{2}\circ)i^{*}\nabla$ is flat.
 \end{Proposition}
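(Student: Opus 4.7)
The plan is to compute the curvature of $(\mathrm{id}\otimes\pi_2\circ)i^*\nabla$ directly and show it vanishes using the flatness of $i^*\nabla$ together with identity~(\ref{eq:UsefulIdentities3}). In the orthonormal frame $N,f_2,f_3,\dots,f_n$ I would write each connection matrix of $i^*\nabla$ in block form
\begin{equation*}
 \omega_i=\begin{pmatrix} A_i & B_i \\ C_i & D_i \end{pmatrix},
\end{equation*}
where $A_i$ is the upper-left $2\times 2$ block. Since the computation just before the proposition gives ${}_{2}\omega_{i}=A_i$, flatness of the projected connection is equivalent to $\partial_iA_j-\partial_jA_i+[A_i,A_j]=0$. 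Extracting the upper-left block of the vanishing curvature $\partial_i\omega_j-\partial_j\omega_i+[\omega_i,\omega_j]$ of $i^*\nabla$ reduces the task to proving the algebraic identity $B_iC_j=B_jC_i$ for all $i,j\in\{2,\dots,n\}$.

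The key ingredient is identity~(\ref{eq:UsefulIdentities3}), which in components reads $(U^\alpha-U^\beta)(\omega_i)^\alpha_\beta=-[E_i,V]^\alpha_\beta$ with $U^1=U^2=u_2$. For $i\geq 3$ the tensor $[E_i,V]$ is supported only in the $i$-th row and column, so at a generic point of $K$ (where $u_2\neq u_s$ for $s\geq 3$) the block $B_i$ must vanish outside the column labelled by $s=i$; by antisymmetry of $\omega_i$ the corresponding row-support statement holds for $C_i$. For $i=2$ instead the identity yields explicit, generically non-vanishing formulas for every entry of $B_2$ and $C_2$ in terms of~$V$.

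Equipped with these two facts, $B_iC_j=B_jC_i$ follows from a three-case check: when $i,j\geq 3$ are distinct, both products vanish because the support column of $B_i$ is disjoint from the support row of $C_j$; the case $i=j$ is tautological; and in the mixed case $i=2$, $j\geq 3$ one substitutes the explicit formulas into both products and uses the antisymmetry $V^{a,b}=-V^{b,a}$ to reduce each to the common value $V^{a,j}V^{b,j}/(u_2-u_j)^2$. The main obstacle is exactly this last case, where the sparsity argument fails and the equality must be verified by an explicit computation. Finally, the genericity assumption $u_2\neq u_s$ used along the way is removed by continuity of the $\omega_i$ on~$K$.
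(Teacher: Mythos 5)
Your argument is correct, and the overall reduction coincides with the paper's: both proofs restrict the vanishing curvature of $i^{*}\nabla$ to the upper-left $2\times 2$ block and observe that flatness of $({\rm id}\otimes\pi_{2}\circ)i^{*}\nabla$ then hinges on the cancellation of the cross terms $\sum_{s\geq 3}\bigl((\omega_{i})^{\alpha}_{s}(\omega_{j})^{s}_{\beta}-(\omega_{j})^{\alpha}_{s}(\omega_{i})^{s}_{\beta}\bigr)$, i.e., your $B_{i}C_{j}=B_{j}C_{i}$. Where you diverge is in the key lemma used to establish this cancellation. The paper deduces it purely algebraically from identity~(\ref{eq:UsefulIdentities1}), $[E_{i},\omega_{j}]=[E_{j},\omega_{i}]$, which directly yields $(\omega_{j})^{\alpha}_{s}=-\delta_{js}(\omega_{2})^{\alpha}_{s}$ and $(\omega_{i})^{\alpha}_{s}=\delta_{is}(\omega_{s})^{\alpha}_{s}$ for $\alpha\in\{1,2\}$, $s\geq 3$; this holds pointwise, requires no division, and — as the Remark following the proposition emphasizes — uses only an identity valid for flat $F$-manifolds, so the result extends beyond the Dubrovin--Frobenius setting. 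Your route instead solves identity~(\ref{eq:UsefulIdentities3}) for the off-diagonal blocks, which buys explicit formulas for $B_{i}$, $C_{i}$ in terms of $V$ (useful elsewhere, e.g., in the Stokes analysis of Section~\ref{sec:StokesMatrices}), but at the cost of three extra inputs: the antisymmetry of the $\omega_{i}$ (valid here because the frame $N,f_{2},\dots,f_{n}$ is $\eta$-orthonormal and $\nabla$ is metric, but worth stating), the antisymmetry of $V$, and the genericity assumption $u_{2}\neq u_{s}$ removed by continuity. Your mixed case $i=2$, $j\geq 3$ does check out: both products equal $V^{\alpha}_{j}V^{\beta}_{j}/(u_{2}-u_{j})^{2}$. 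So the proof is valid, just less economical and less general than the paper's.
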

 \begin{proof}
 The connection $\nabla$ on $\mathcal{T}_{M}$ is flat. Hence the connection $i^{*}\nabla$ on $i^{*}\mathcal{T}_{M}$ is flat, in particular for $\alpha,\beta\in\{ 1,2 \}$ and $i,j\in\{ 2,\dots,n \}$, $i\neq j$, we have
 \begin{equation*}
 0=\frac{\partial (\omega_{i})^{\alpha}_{\beta}}{\partial u_{j}}-\frac{\partial (\omega_{j})^{\alpha}_{\beta}}{\partial u_{i}}-[\omega_{i},\omega_{j}]^{\alpha}_{\beta}.
 \end{equation*}
 Let us compute
 \begin{equation*}
 [\omega_{i},\omega_{j}]^{\alpha}_{\beta}=\sum_{s=1,2}\bigl((\omega_{i})^{\alpha}_{s}(\omega_{j})^{s}_{\beta}-(\omega_{j})^{\alpha}_{s}(\omega_{i})^{s}_{\beta}\bigr)
 +\sum_{s=3}^{n}\bigl((\omega_{i})^{\alpha}_{s}(\omega_{j})^{s}_{\beta}-(\omega_{j})^{\alpha}_{s}(\omega_{i})^{s}_{\beta}\bigr).
 \end{equation*}
 Suppose $i=2$ and $j>2$, for $s>2$ from equation~(\ref{eq:UsefulIdentities1}), we obtain $(\omega_{j})^{\alpha}_{s}=-\delta_{js}(\omega_{2})^{\alpha}_{s}$ and $(\omega_{j})^{s}_{\beta}=-\delta_{js}(\omega_{2})^{s}_{\beta}$. Hence,
 \begin{equation*}
 \sum_{s=3}^{n}(\omega_{2})^{\alpha}_{s}(\omega_{j})^{s}_{\beta}-(\omega_{j})^{\alpha}_{s}(\omega_{2})^{s}_{\beta}=0.
 \end{equation*}
 Now suppose $i,j>2$, again from equation~(\ref{eq:UsefulIdentities1}) and $s>2$, we have $(\omega_{i})^{\alpha}_{s}=\delta_{is}(\omega_{s})^{\alpha}_{s}$ and $(\omega_{j})^{s}_{\beta}=\delta_{js}(\omega_{s})^{s}_{\beta}$ so again the above sum vanishes.

 Therefore, for $\alpha,\beta\in\{ 1,2 \}$ and $i,j\in\{ 2,\dots,n \}$, $i\neq j$, we obtain
 \begin{equation*}
 0=\frac{\partial (\omega_{i})^{\alpha}_{\beta}}{\partial u_{j}}-\frac{\partial (\omega_{j})^{\alpha}_{\beta}}{\partial u_{i}}-\sum_{s=1,2}(\omega_{i})^{\alpha}_{s}(\omega_{j})^{s}_{\beta}-(\omega_{j})^{\alpha}_{s}(\omega_{i})^{s}_{\beta}.
 \end{equation*}
 But this last expression is nothing else than the curvature tensor of connection $({\rm id}\otimes\pi_{2}\circ)i^{*}\nabla$.
 \end{proof}

 \begin{Remark}
 In the proof of the previous proposition, we only used equality~(\ref{eq:UsefulIdentities1}). This equality also holds true for the so called flat $F$-manifolds (see~\cite{https://doi.org/10.48550/arxiv.2001.05599,https://doi.org/10.48550/arxiv.2104.09380}), these $F$-manifolds are equipped with a flat connection $\nabla$ on the tangent bundle such that the deformed connection $\bar{\nabla}=\nabla+z\circ$ is flat for every $z\in\mathbb{C}$.
 \end{Remark}

\section[Monodromy data at z=0]{Monodromy data at $\boldsymbol{z=0}$}
 \label{sec:Monodromy}

 It this section, we describe the monodromy data of equation~(\ref{eq:TheDifferentialEquation}) at $z=0$. We refer to~\cite{Dubrovin:1994hc}, where it is shown that the monodromy matrix of the Levelt form solution is constant.

 The singularity of equation~(\ref{eq:TheDifferentialEquation}) at $z=0$ is Fuchsian, and therefore there exists a holomorphic gauge transformation
 \begin{equation*}
 Y=T\tilde{Y}=\Biggl( T_{0}+\sum_{k=1}^{\infty}T_{k}z^{k}\Biggr)\tilde{Y},
 \end{equation*}
 which transforms it to a simpler equation
 \begin{equation*}
 \frac{{\rm d}\tilde{Y}}{{\rm d}z}=z^{-1}\bigl(J+R_{1}z+\cdots+R_{p}z^{p} \bigr),
 \end{equation*}
 where $J=T_{0}^{-1}\mu T_{0}$ is the Jordan form of the matrix $\mu$ and the entries $(R_{k})_{j}^{i}$ are different from zero only if the eigenvalues $\mu_{l}$ of $\mu$ satisfy $\mu_{i}-\mu_{j}=k\in\mathbb{N}_{> 0}$.

 Writing $\mu_{i}=d_{i}+s_{ii}$, where $d_{i}\in\mathbb{Z}$ and $0\leq \operatorname{Re}(s_{ii})<1$,
 we can write $J=D+S$ with~$D$ a~diagonal matrix with $d_{i}$ as eigenvalues ($S$ is the only part of $J$ which contributes to the mono\-dromy). If we also set $R:=\sum_{k=1}^{p}R_{k}$, then a fundamental matrix solution of equation~(\ref{eq:TheDifferentialEquation}) is\looseness=-1
 \begin{equation}
 \label{eq:LeveltSolution}
 Y_{\rm L}=Tz^{D}z^{R+S}.
 \end{equation}
 This particular kind of solution is called \emph{Levelt fundamental matrix solution}. The monodromy around $z=0$ is the matrix
 \begin{equation*}
 \tilde{M}:={\rm e}^{2\pi {\rm i}(R+S)}.
 \end{equation*}
 In~\cite{Dubrovin:1994hc}, it is shown that this matrix is (locally) constant for all points $p$ of the Dubrovin--Frobenius manifold~$M$.

\section{The exponent of formal monodromy}
 \label{sec:ExponentOfFormalMonodromy}
 In this section, we write a formal solution at $z=\infty$ of equation~(\ref{eq:TheDifferentialEquation}) and compute its exponent of formal monodromy. On Proposition~\ref{prop:ExponentMultiplication}, we show that the exponent of formal monodromy only depends on the natural number $m$ corresponding to the $I_{2}(m)$ $F$-manifold appearing in Hertlings decomposition (see equality~(\ref{eq:ChangeCoordinates})). Recall that, along the caustic, on the orthonormal basis $N$, $f_{i}$, $i=2,\dots, n$ the matrix $U$ of $E\circ$ is diagonal with the first two eigenvalues equal to~$u_{2}$ and the matrix $V$ of $\mu$ is antisymmetric. System~(\ref{eq:TheDifferentialEquation}) reads
 \begin{equation*}
 \frac{{\rm d}Y}{{\rm d}z}=\biggl( \frac{1}{z}V-U\biggr)Y.
 \end{equation*}

 We start by doing a formal Gauge transformation
 \begin{equation}
 \label{eq:FormalGauge}
 Y(z,u)=G\tilde{Y}=\Biggl({\rm Id}+\sum_{k=1}^{\infty}G_{k}(u)z^{-k}\Biggr)\tilde{Y}(z,u)=\Biggl(\sum_{k=0}^{\infty}G_{k}z^{-k}\Biggr)\tilde{Y},
 \end{equation}
 where the matrices $G_{k}$ are to be determined. Since the matrix $U$ has $n-2$ different eigenvalues we wish to find an equivalent block diagonal system that should consist of a $2\times 2$ diagonal block and $n-2$ blocks of dimension $1$. Setting
 \begin{equation}
 \label{eq:BlockDiagonalSystem}
 \frac{{\rm d}\tilde{Y}}{{\rm d}z}=\Biggl(-U+\sum_{k=1}^{\infty}B_{k}(u)z^{-k}\Biggr)\tilde{Y},
 \end{equation}
 where the matrices $B_{k}$ are also to be determined, we get the recursive relations
 \begin{equation}
 \label{eq:GRecursiveRelations}
 -[U,G_{k}]+(k-1)G_{k-1}+VG_{k-1}-\sum_{s=1}^{k-1}G_{k-s}B_{s}=B_{k} \qquad\mathrm{for}\ k\geq 1.
 \end{equation}
 So if we already now $G_{1},\dots,G_{k-1}$ and $B_{1},\dots,B_{k-1}$, we can try to solve the above equation and obtain $G_{k}$ and $B_{k}$. For $k=1$, we need to solve
 \begin{equation*}
 -[U,G_{1}]+V=B_{1}.
 \end{equation*}
 The matrix $B_{1}=V+[G_{1},U]$ must have entries
 \begin{equation*}
 (B_{1})^{i}_{j}=V^{i}_{j}-(G_{1})^{i}_{j}(u_{i}-u_{j}),
 \end{equation*}
 so whenever $1\neq i\neq 2$ or $1\neq j\neq 2$ (recall $u_{1}=u_{2}$), we choose
 \begin{equation}
 \label{eq:G1Equation}
 (G_{1})^{i}_{j}=\frac{(V)^{i}_{j}}{u_{i}-u_{j}},\qquad
 (B_{1})^{i}_{j}=0,
 \end{equation}
 and therefore all entries of $B_{1}$ are zero except for the highest leftmost $2\times 2$ block which is
 \begin{equation*}
 \begin{pmatrix}
 0 & V^{1}_{2}\\
 -V^{1}_{2}& 0
 \end{pmatrix}
 \end{equation*}
 and $(G_{1})^{1}_{2}=(G_{1})^{2}_{1}=(G_{1})^{i}_{i}=0$.

 The equations for $k>1$ can be solved in an analogous way. We obtain that the only non-zero entries of the matrix $B_{k}$ are the ones in the diagonal and the highest leftmost $2\times 2$ block. That is, after the formal Gauge transformation~(\ref{eq:FormalGauge}), we obtain the block diagonal system~(\ref{eq:BlockDiagonalSystem}).

 \begin{Remark}
 Note that the matrices $G_{k}$ are defined uniquely modulo $\operatorname{ker} ({\operatorname{ad}U})$.
 If we set $(G_{k})^{1}_{2},(G_{k})^{2}_{1},(G_{k})^{i}_{i}=0$, then the matrices $G_{k}$ are uniquely defined. We could also do another Gauge transformation $Y=D\bar{Y}$ with $D$ a block-diagonal matrix and still obtain a system with diagonal principal part. In our case, this choice is fixed by writing system~(\ref{eq:TheDifferentialEquation}) in the orthonormal basis $N$, $f_{i}$, $i=2,\dots,n$.
 \end{Remark}
 We now do the gauge transformation $\tilde{Y}={\rm e}^{-Uz}\bar{Y}$, since ${\rm e}^{-Uz}$ acts by scalar multiplication on each block, from system~(\ref{eq:BlockDiagonalSystem}) we obtain a new system of the form
 \begin{equation}
 \label{eq:IntermidateSystem}
 \frac{{\rm d}\bar{Y}}{{\rm d}z}=\Biggl( B_{1}z^{-1}+\sum_{k=2}^{\infty}B_{k}z^{-k}\Biggr)\bar{Y}.
 \end{equation}
 Since all the matrices $B_{k}$ have the same block structure, this last system is a direct sum of Fuchsian systems. Notice that the eigenvalues of $B_{1}$ are zero and $\pm {\rm i}V^{1}_{2}$. We will suppose that $2{\rm i} V^{1}_{2}\notin\mathbb{Z}\setminus \{ 0 \}$ so that the matrix $B_{1}$ is non-resonant. In Proposition~\ref{prop:ExponentMultiplication}, we will show that in our case this always holds true. Since $B_{1}$ is non-resonant, we can find a gauge transformation
\begin{equation*}
 \bar{Y}=H\hat{Y}=H_{0}\Biggl({\rm Id}+\sum_{k=1}^{\infty}H_{k}z^{-k}\Biggr)\hat{Y},
 \end{equation*}
 where the matrices $H_{k}$ have the same block structure as the matrices $B_{k}$ and such that $\bar{Y}$ satisfies the equation
 \begin{equation*}
 \frac{{\rm d}\hat{Y}}{{\rm d}z}=\frac{B}{z}\hat{Y}
 \end{equation*}
 with
 \begin{align*}
 & B:=H_{0}^{-1}B_{1}H_{0}=\operatorname{diag}\bigl({\rm i}V^{1}_{2},-{\rm i}V^{1}_{2},0,\dots,0\bigr),\\
 & \hat{B}_{k}:=H_{0}^{-1}B_{k}H_{0}\qquad \mathrm{for}\ k\geq2,
 \end{align*}
 and
 \begin{equation*}
 [B,H_{k}]+\frac{1}{k}H_{k}=-\hat{B}_{k+1}-\sum_{l=1}^{k-1}\hat{B}_{k+1-l}H_{l} \qquad \mathrm{for}\ k\geq 1.
 \end{equation*}
 Note that after choosing the diagonalizing matrix $H_{0}$ the matrices $H_{k}$ are uniquely determined. Before proceeding to write down the formal solution of equation~(\ref{eq:TheDifferentialEquation}) let us pause a bit to show that we can choose the highest leftmost block of the matrix $H_{0}=H_{0}(u)$ in a special way that will allow us to find isomonodromic fundamental matrix solutions of equation~(\ref{eq:TheDifferentialEquation}).
 \begin{Lemma}
 \label{lem:DiagonalizingV}
 Consider the matrix
 \begin{equation*}
 \tilde{V}=\begin{pmatrix}
 0 & V^{1}_{2}\\
 -V^{1}_{2}&0
 \end{pmatrix}.
 \end{equation*}
 Then there exists a matrix $H_{0}=H_{0}(u)$ which diagonalizes $\tilde{V}$ and such that
 \begin{equation*}
 {\rm d}H_{0}=-\sum_{i=2}^{n}E_{2}\omega_{i}E_{2}{\rm d}u_{i}.
 \end{equation*}
 That is, the columns of $H_{0}$ are $({\rm id}\otimes\pi_{2}\circ)i^{*}\nabla$-flat.
 \end{Lemma}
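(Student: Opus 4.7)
The plan is to exploit flatness of the connection $({\rm id}\otimes\pi_{2}\circ)i^{*}\nabla$ from Proposition~\ref{prop:FlatnessDiagonalConnection} together with a simple algebraic observation: in the orthonormal frame $N,f_{2},\dots,f_{n}$, both $\tilde V$ and the $2\times 2$ blocks $_{2}\omega_{i}=E_{2}\omega_{i}E_{2}$ are valued in the one-dimensional space of antisymmetric $2\times 2$ matrices, hence they commute and share a common constant eigenbasis.

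More precisely, since $i^{*}\nabla$ is a metric connection for $i^{*}\eta$ and $N,f_{2},\dots,f_{n}$ is $i^{*}\eta$-orthonormal, each $\omega_{i}$ is antisymmetric; therefore $_{2}\omega_{i}=c_{i}J$ with $J=\bigl(\begin{smallmatrix}0&1\\-1&0\end{smallmatrix}\bigr)$ and $c_{i}=(\omega_{i})^{2}_{1}$. Since $V$ is antisymmetric as well, $\tilde V=V^{1}_{2}J$. The matrix $J$ has the constant eigenvectors $\alpha=(1,{\rm i})^{T}$ and $\beta=(1,-{\rm i})^{T}$ with eigenvalues $\pm{\rm i}$. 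I would then search for $H_{0}$ of block form
\[
H_{0}=\begin{pmatrix} P\,\operatorname{diag}(a(u),b(u)) & 0\\ 0 & I_{n-2}\end{pmatrix},\qquad P=(\alpha\,|\,\beta).
\]
By construction $H_{0}$ diagonalizes $\tilde V$ (conjugation by the diagonal factor leaves the already diagonal $P^{-1}\tilde V P=\operatorname{diag}({\rm i}V^{1}_{2},-{\rm i}V^{1}_{2})$ untouched), so the first requirement is automatic.

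The flatness requirement, interpreted as $dH_{0}=-\bigl(\sum_{i}\,_{2}\omega_{i}\,du_{i}\bigr)H_{0}$, then reduces via $J\alpha={\rm i}\alpha$, $J\beta=-{\rm i}\beta$ to the scalar equations $da+{\rm i}a\gamma=0$ and $db-{\rm i}b\gamma=0$, where $\gamma:=\sum_{i}c_{i}\,du_{i}$ is a scalar $1$-form. These are locally solvable precisely when $\gamma$ is closed.

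To finish, I would apply Proposition~\ref{prop:FlatnessDiagonalConnection} directly: flatness of $({\rm id}\otimes\pi_{2}\circ)i^{*}\nabla$ with connection form $\Omega=\gamma J$ gives $d\Omega+\Omega\wedge\Omega=0$. Since $\Omega\wedge\Omega=(\gamma\wedge\gamma)J^{2}=0$ (as $\gamma\wedge\gamma=0$ for any scalar $1$-form), this forces $d\gamma=0$. Writing $\gamma=df$ locally, one takes $a=e^{-{\rm i}f}$, $b=e^{{\rm i}f}$, and the $H_{0}$ above has both desired properties. The only real point of care is the initial antisymmetry observation: once one recognises that $\tilde V$ and every $_{2}\omega_{i}$ are proportional to the single matrix $J$, the problem collapses to integrating one closed scalar $1$-form, and no more delicate obstruction survives.
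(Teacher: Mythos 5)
Your proof is correct, but it takes a genuinely different route from the paper's. The paper starts from an arbitrary invertible matrix $\bar{H}_{0}$ of $({\rm id}\otimes\pi_{2}\circ)i^{*}\nabla$-flat sections (which exists by Proposition~\ref{prop:FlatnessDiagonalConnection}) and then proves, using the identities \eqref{eq:UsefulIdentities1}--\eqref{eq:UsefulIdentities3}, that ${\rm d}\tilde{V}+\big[_{2}\omega,\tilde{V}\big]=0$, i.e., that $\tilde{V}$ is covariantly constant for the reduced connection; hence $\bar{H}_{0}^{-1}\tilde{V}\bar{H}_{0}$ is a constant matrix, which can be diagonalized by a constant matrix $C$, and one sets $H_{0}=\bar{H}_{0}C$. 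You instead observe that in the $i^{*}\eta$-orthonormal frame $N,f_{2},\dots,f_{n}$ metric compatibility of $i^{*}\nabla$ forces each $\omega_{i}$, and hence each block $_{2}\omega_{i}$, to be antisymmetric, so that $\tilde{V}$ and all the $_{2}\omega_{i}$ lie in the one-dimensional abelian algebra $\mathbb{C}\cdot J$; the constant eigenbasis of $J$ then reduces the flatness condition to integrating a single scalar $1$-form $\gamma$, whose closedness follows from Proposition~\ref{prop:FlatnessDiagonalConnection} because the commutator term in the curvature vanishes identically. Both arguments are sound (and you correctly read the displayed equation of the lemma as ${\rm d}H_{0}=-\bigl(\sum_{i}\,_{2}\omega_{i}{\rm d}u_{i}\bigr)H_{0}$, as the ``columns are flat'' gloss demands). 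The trade-off: your argument is more elementary and makes the existence of a \emph{constant} diagonalizing eigenbasis transparent, but it leans essentially on orthonormality of the frame and on the fact that $2\times 2$ antisymmetric matrices commute --- a feature special to losing exactly one idempotent --- whereas the paper's computation uses only the structural identities \eqref{eq:UsefulIdentities1}--\eqref{eq:UsefulIdentities3} and additionally establishes the stronger and conceptually useful fact that $\tilde{V}$ is flat for the reduced connection. One small point you should make explicit is the justification of the antisymmetry of the $\omega_{i}$ (the paper states this only for $V$): it follows from $\nabla\eta=0$ pulled back to $K$ and evaluated in the orthonormal frame.
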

 \begin{proof}
 Let $\bar{H}_{0}$ be any invertible matrix whose columns are $({\rm id}\otimes\pi_{2}\circ)i^{*}\nabla$-flat (this matrix exists thanks to Proposition~\ref{prop:FlatnessDiagonalConnection}). Note that the connection $({\rm id}\otimes\pi_{2}\circ)i^{*}\nabla$ is a connection on a bundle of rank two and therefore $\bar{H}_{0}$ is a two by two matrix. Let $_{2}\omega=\sum_{i=2}^{n}\,_{2}\omega_{i}{\rm d}u_{i}$ be the connection form of the connection $({\rm id}\otimes\pi_{2}\circ)i^{*}\nabla$, by definition of the matrix $\bar{H}_{0}$, we have
 \begin{equation*}
 {\rm d}\bigl(\bar{H}_{0}^{-1}\tilde{V}\bar{H}_{0}\bigr)=\bar{H}_{0}^{-1}\bigl({\rm d}\tilde{V}+\big[_{2}\omega,\tilde{V}\big]\bigr)\bar{H}_{0}.
 \end{equation*}
 Let us see that ${\rm d}\tilde{V}+\big[_{2}\omega,\tilde{V}\big]=0$. From equation~(\ref{eq:UsefulIdentities2}) we have ${\rm d}V+[\omega,V]=0$. This gives
 \begin{equation*}
 \frac{\partial V^{1}_{2}}{\partial u_{i}}=\sum_{s=1}^{n}V^{1}_{s}(\omega_{i})^{s}_{2}-(\omega_{i})^{1}_{s}V^{s}_{2}.
 \end{equation*}
 From equations~(\ref{eq:UsefulIdentities1}), we get
 \begin{equation*}
 \sum_{s=3}^{n}V^{1}_{s}(\omega_{i})^{s}_{2}-(\omega_{i})^{1}_{s}V^{s}_{2}=V^{1}_{i}(\omega_{i})^{i}_{2}-(\omega_{i})^{1}_{i}V^{i}_{2},
 \end{equation*}
 and from equation~(\ref{eq:UsefulIdentities3}) the above sum vanishes. Therefore,
 \begin{equation*}
 \frac{\partial\tilde{V}^{1}_{2}}{\partial u_{i}}=\frac{\partial V^{1}_{2}}{\partial u_{i}}=[V,\omega_{i}]^{1}_{2}=\big[\tilde{V}, \,_{2}\omega_{i}\big]^{1}_{2}.
 \end{equation*}
 Hence the matrix $\bar{H}_{0}^{-1}\tilde{V}\bar{H}_{0}$ is constant. Therefore, we can find a constant matrix $C$ such that $C^{-1}\bar{H}_{0}^{-1}\tilde{V}\bar{H}_{0}C$ is diagonal. Since $C$ is constant we can take $H_{0}=\bar{H}_{0}C$.
 \end{proof}

 Let us go back to the formal solution of equation~(\ref{eq:TheDifferentialEquation}). Putting together the Gauge transformations $Y=G\tilde{Y}$, $\tilde{Y}={\rm e}^{-Uz}\bar{Y}$ and $\bar{Y}=H\hat{Y}$, we obtain a formal solution to equation~(\ref{eq:TheDifferentialEquation})%
 \begin{equation}
 \label{eq:FormalSolution}
 Y_{F}=GHz^{B}{\rm e}^{-Uz}=\bigl(H_{0}+(H_{0}H_{1}+G_{1}H_{0})z^{-1}+O\bigl(z^{-2}\bigr)\bigr){\rm e}^{-Uz}z^{B}.
 \end{equation}
 The matrix $B$ is called the \emph{exponent of formal monodromy} and by the above, in order to prove that it is constant we only need to show that $V^{1}_{2}$ is a constant. In the following proposition, we compute $V^{1}_{2}$ explicitly.
 \begin{Proposition}
 \label{prop:ExponentMultiplication}
 Let $(M,\circ,e,E,\eta)$ be a Dubrovin--Frobenius manifold with non-empty caustic~$K$ and suppose that for a point $p\in K$ the germ of $M$ at $p$ as an $F$-manifold is isomorphic to $I_{2}(m)\times(A_{1})^{n-2}$ with $m\geq 3$. Then the only non-zero entries of the exponent of formal monodromy are
 \begin{equation*}
 V^{2}_{1}=\frac{{\rm i}}{2}\frac{m-2}{m} \qquad \text{and} \qquad V^{1}_{2}=-V^{2}_{1}=-\frac{{\rm i}}{2}\frac{m-2}{m}.
 \end{equation*}
\end{Proposition}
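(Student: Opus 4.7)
The plan is to compute $V^1_2|_K$ as a limit of the corresponding quantity $\bar V^1_2$ defined on the semisimple side $\bar W\setminus K$. The key observation is that the restriction of $\mu$ to the generalised eigenspace of $E\circ$ with eigenvalue $u_2$ -- spanned by $\bar\pi_1,\bar\pi_2$ on $\bar W\setminus K$ and by $N,\pi_2$ along $K$ -- is a smooth $\eta$-antisymmetric endomorphism of a rank-$2$ subbundle whose eigenvalues $\pm{\rm i}V^1_2$ vary continuously across $K$. Hence it suffices to compute their limit as $t\to 0$.

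On the semisimple side, a direct computation from $\mu=\frac{2-d}{2}{\rm Id}-\nabla E$ with $E=\sum_k\bar u_k\bar\pi_k$ and the Christoffel symbols of the diagonal metric $\eta=\sum_k H_k^2\,{\rm d}\bar u_k^2$, where $H_k:=\sqrt{\eta(\bar\pi_k,\bar\pi_k)}$, yields the classical formula
\[
\bar V^i_j \;=\; (\bar u_j - \bar u_i)\gamma_{ij}, \qquad \gamma_{ij}:=\frac{\partial_{\bar u_j}H_i}{H_j}, \qquad i\neq j.
\]
The symmetry $\gamma_{ij}=\gamma_{ji}$ follows from Lemma~\ref{prop:Closedness} applied to $e=\sum_i\bar\pi_i$, which gives that $\eta(e,-)=\sum_j H_j^2\,{\rm d}\bar u_j$ is closed. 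Next I use the basis-change relations~(\ref{eq:ChangeBasis}), rewritten as $\bar\pi_1+\bar\pi_2=\pi_2$ and $\bar\pi_1-\bar\pi_2=t^{-(m-2)/2}\partial_t$, together with the compatibility identity $\eta(\partial_t,\partial_t)=\eta(\partial_t\circ\partial_t,e)=t^{m-2}\eta(\pi_2,\pi_2)$, to obtain
\[
H_{1,2}^2 \;=\; \tfrac12\,\eta(\pi_2,\pi_2) \;\pm\; \tfrac12\, t^{-(m-2)/2}\eta(\partial_t,\pi_2).
\]

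The final step is an asymptotic expansion as $t\to 0$. Using $\bar u_1-\bar u_2=\tfrac{4}{m}t^{m/2}$ and $\partial_{\bar u_2}=\tfrac12\partial_{u_2}-\tfrac12 t^{1-m/2}\partial_t$ from~(\ref{eq:ChangeCoordinates}), the dominant contribution to $\partial_{\bar u_2}(H_1^2)$ comes from differentiating the singular factor $t^{-(m-2)/2}$ and equals $\tfrac{m-2}{8}t^{-(m-1)}\eta(\partial_t,\pi_2)$; the product $H_1H_2$ has leading order $\pm\tfrac{{\rm i}}{2}t^{-(m-2)/2}\eta(\partial_t,\pi_2)$, with a sign depending on a choice of branch of the square root. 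Plugging these into $\bar V^1_2 = -\tfrac{4}{m}t^{m/2}\cdot\partial_{\bar u_2}(H_1^2)/(2H_1H_2)$, the powers of $t$ cancel exactly, since $m/2+(m-2)/2-(m-1)=0$, the factor $\eta(\partial_t,\pi_2)$ cancels, and the remaining constants combine to give $V^1_2 = \mp\tfrac{{\rm i}(m-2)}{2m}$. Selecting the branch compatible with the choice of $H_0$ fixed in Lemma~\ref{lem:DiagonalizingV} yields $V^1_2=-\tfrac{{\rm i}(m-2)}{2m}$. The value of $V^2_1$ follows from antisymmetry of $V$, and the non-resonance condition $2{\rm i}V^1_2\notin\mathbb{Z}\setminus\{0\}$ invoked in Section~\ref{sec:ExponentOfFormalMonodromy} reduces to $(m-2)/m\notin\mathbb{Z}\setminus\{0\}$, which holds for all $m\geq 3$.

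The main difficulty is that the individual ingredients $H_1$, $H_2$ and $\gamma_{12}$ all develop singularities -- or become imaginary -- near the caustic, and the unknown smooth function $\eta(\partial_t,\pi_2)$ enters all intermediate expressions; the content of the proposition is precisely that these singularities and unknowns cancel in the invariant combination, producing a universal constant depending only on $m$. Matching the sign of the ${\rm i}$ to the formal-solution conventions requires a careful choice of branch of the square root, which is what Lemma~\ref{lem:DiagonalizingV} is designed to control.
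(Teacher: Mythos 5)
Your proof is correct but follows a genuinely different route from the paper's. The paper computes $V^2_1=-\eta(f_2,\nabla_N E)$ directly \emph{on} the caustic: it writes $\eta$ and its Christoffel symbols in the basis $\partial_t,\pi_2,\dots,\pi_n$, uses $\eta_{11}=t^{m-2}\eta_{22}=0$ and the homogeneity $\mathcal{L}_E\eta=(2-d)\eta$ to kill the sums over $s$, and finishes with Lemma~\ref{prop:Closedness} to get $\eta_{12,2}-\eta_{22,1}=0$. You instead compute on the semisimple side via the classical rotation-coefficient formula $\bar V^i_j=(\bar u_j-\bar u_i)\gamma_{ij}$ and take the limit $t\to 0$ using the explicit Hertling normal form (\ref{eq:ChangeCoordinates})--(\ref{eq:ChangeBasis}); I checked the asymptotics ($H_{1,2}^2=\tfrac12\eta_{22}\pm\tfrac12 t^{-(m-2)/2}\eta_{12}$, the dominant term $\tfrac{m-2}{8}t^{1-m}\eta_{12}$ of $\partial_{\bar u_2}(H_1^2)$, and the exact cancellation of powers of $t$ and of $\eta_{12}$) and they are right. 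What your approach buys is that the computation reduces to a one-line expansion once the semisimple formula is granted; what it costs is the need to justify passing to the limit. Two points there deserve sharpening. First, $\mu$ does not preserve the span of $\partial_t,\pi_2$ (the entries $V^i_j$ with $j\geq3$ are generally nonzero), so what varies continuously is the \emph{compression} $P\mu P$ of $\mu$ to this rank-$2$ subbundle, $P$ being the $\eta$-orthogonal projection -- which is smooth across $K$ because $\langle\pi_3,\dots,\pi_n\rangle$ is $\eta$-orthogonal to $\langle\partial_t,\pi_2\rangle$ and the Gram determinant tends to $-\eta_{12}^2\neq0$. Since $V^1_2$ and $\bar V^1_2$ are precisely the off-diagonal entries of this compression in orthonormal frames, its frame-independent eigenvalues $\pm{\rm i}V^1_2$ do the job; this matters because the frames $\bar f_1,\bar f_2$ themselves degenerate as $t\to0$ (note $H_{1,2}\to\infty$) and do \emph{not} converge to $N,f_2$, so continuity cannot be argued at the level of matrix entries in a fixed continuing frame. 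Second, the residual sign is fixed by the orientation of the frame $\{N,f_2\}$, i.e., by the paper's choice of square roots in $N=-{\rm i}\tfrac{\sqrt{\eta_{22}}}{\eta_{12}}\partial_t+\tfrac{{\rm i}}{\sqrt{\eta_{22}}}\pi_2$, rather than by Lemma~\ref{lem:DiagonalizingV}; but since flipping it merely permutes the diagonal entries of $B=\operatorname{diag}\bigl({\rm i}V^1_2,-{\rm i}V^1_2,0,\dots,0\bigr)$, this does not affect the conclusion, and your verification of the non-resonance condition $2{\rm i}V^1_2=\pm\tfrac{m-2}{m}\notin\mathbb{Z}\setminus\{0\}$ stands.
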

\begin{proof}
 By Hertling's decomposition, on a neighborhood of a point $p$ of the caustic there exist coordinates $(t,u_{2},\dots,u_{n})$ such that the Euler vector field is written as $E=\tfrac{2}{m}t\partial_{t}+\sum_{s=2}^{n}u_{s}\pi_{s}$.
 We need to compute
 \begin{equation*}
 V_{1}^{2}=\eta(f_{2},\mu N)=-\eta(f_{2},\nabla_{N}E).
 \end{equation*}
 On the basis $\partial_{t},\pi_{2},\dots,\pi_{n}$, the metric $\eta$ takes the form
\begin{equation*}
 \begin{pmatrix}
 \eta_{11}&\eta_{12}&0&\dots&0\\
 \eta_{21}&\eta_{22}&0&\dots&0\\
 0&0&\eta_{33}&\,&\vdots\\
 \vdots& \,&\,&\ddots&\,\\
 0&0&\dots&\,&\eta_{nn}
 \end{pmatrix}.
\end{equation*}
On the caustic $\{ t=0 \}$, we have $\eta_{11}=t^{m-2}\eta_{22}=0$ and the normal to it is $N=-{\rm i}\frac{\sqrt{\eta_{22}}}{\eta_{12}}\partial_{t}+\frac{{\rm i}}{
\sqrt{\eta_{22}}}\pi_{2}$. On the other hand,
\begin{equation*}
 \nabla E=\frac{2}{m}{\rm d}t\otimes\partial_{t}+\sum_{s=2}^{n}{\rm d} u_{s}\otimes\pi_{s} +\frac{2}{m}t\nabla\partial_{t}+\sum_{s=2}^{n}u_{s}\nabla\pi_{s}.
\end{equation*}
Therefore, using the Christoffel $\Gamma_{ij}^{k}$ symbols of the basis $\partial_{t}$, $\pi_{i}$, $i=2,\dots,n$ gives
\begin{align*}
 &\nabla_{\partial_{t}}E=\Biggl( \frac{2}{m}+\frac{2}{m}t\Gamma_{11}^{1}+\sum_{s=2}^{n}u_{s}\Gamma_{1s}^{1}\Biggr)\partial_{t}+\Biggl( \frac{2}{m}t\Gamma_{11}^{2}+\sum_{s=2}^{n}u_{s}\Gamma_{1s}^{2}\Biggr)\pi_{2}+\cdots,\\
 &\nabla_{\pi_{2}}E=\Biggl(\frac{2}{m}t\Gamma_{21}^{1}+\sum_{s=2}^{n}u_{s}\Gamma_{2s}^{1}\Biggr)\partial_{t}+\Biggl( 1+\frac{2}{m}t\Gamma_{21}^{2}+\sum_{s=2}^{n}u_{s}\Gamma_{2s}^{2}\Biggr)\pi_{2}+\cdots.
\end{align*}
With this, we get
\begin{gather*}
 V^{2}_{1}= {\rm i}\frac{m-2}{m}+\frac{2}{m}t\bigg[ \frac{{\rm i}}{\eta_{22}}\bigl(\Gamma_{21}^{1}\eta_{12}+\Gamma_{21}^{2}\eta_{22} \bigr)-\frac{{\rm i}}{\eta_{12}}\bigl(\Gamma_{11}^{1}\eta_{12}+\Gamma_{11}^{2}\eta_{22} \bigr)\bigg] \\
\hphantom{V^{2}_{1}=}{}+\sum_{s=2}^{n}u_{s}\bigg[ \frac{{\rm i}}{\eta_{22}}\bigl(\Gamma_{2s}^{1}\eta_{12}+\Gamma_{2s}^{2}\eta_{22} \bigr)-\frac{{\rm i}}{\eta_{12}}\bigl(\Gamma_{1s}^{1}\eta_{12}+\Gamma_{1s}^{2}\eta_{22} \bigr)\bigg].
\end{gather*}
 Now using the form of the metric and the fact that, on the caustic $\{ t=0 \}$, we have $\eta_{22,s}=0$ for~$s\geq 2$ ($f_{,s}$ denotes the partial derivative of the function $f$ with respect to the $s$-th coordinate), we get
 \begin{align*}
 &\frac{{\rm i}}{\eta_{22}}\bigl(\Gamma_{22}^{1}\eta_{12}+\Gamma_{22}^{2}\eta_{22} \bigr)=\frac{{\rm i}}{2}\frac{\eta_{22,2}}{\eta_{22}},\\
 &-\frac{{\rm i}}{\eta_{12}}\bigl(\Gamma_{12}^{1}\eta_{12}+\Gamma_{12}^{2}\eta_{22} \bigr)=-\frac{{\rm i}}{2}\frac{\eta_{22,1}}{\eta_{12}},
 \end{align*}
 and for $i\geq s$
 \begin{align*}
 &\frac{{\rm i}}{\eta_{22}}\bigl(\Gamma_{2s}^{1}\eta_{12}+\Gamma_{2i}^{2}\eta_{22} \bigr)=\frac{{\rm i}}{2}\frac{\eta_{22,s}}{\eta_{22}},\\
 &-\frac{{\rm i}}{\eta_{12}}\bigl(\Gamma_{1s}^{1}\eta_{12}+\Gamma_{1s}^{2}\eta_{22} \bigr)=-\frac{{\rm i}}{2}\frac{\eta_{12,s}}{\eta_{12}}.
 \end{align*}
 So on the caustic
 \begin{align*}
 V_{1}^{2}&={\rm i}\Bigg[ \frac{m-2}{m}+\frac{1}{2}\Biggl( u_{2}\biggl( \frac{\eta_{22,2}}{\eta_{22}}-\frac{\eta_{22,1}}{\eta_{12}}\biggr)+\sum_{s=3}^{n}u_{s}\biggl( \frac{\eta_{22,s}}{\eta_{22}}-\frac{\eta_{12,s}}{\eta_{12}}\biggr)\Biggr)\Bigg]\\
 &={\rm i}\Bigg[ \frac{m-2}{m}+\frac{1}{2}\Biggl( \sum_{s=2}^{n}u_{s}\biggl( \frac{\eta_{22,s}}{\eta_{22}}-\frac{\eta_{12,s}}{\eta_{12}}\biggr)+\frac{u_{2}}{\eta_{12}}(\eta_{12,2}-\eta_{22,1})\Biggr)\Bigg].
 \end{align*}
 Along the caustic, we have $E=\sum_{i=2}^{n}u_{i}\pi_{i}$ and the condition $\mathcal{L}_{E}\eta=(2-d)\eta$ implies $E(\eta_{22})=-d\eta_{22}$ and $E(\eta_{12})=\bigl(-d+\frac{m-2}{m}\bigr)\eta_{12}$. This gives
 \begin{equation*}
 V^{2}_{1}=\frac{{\rm i}}{2}\biggl( \frac{m-2}{m}+\frac{u_{2}}{\eta_{12}}(\eta_{12,2}-\eta_{22,1})\biggr).
 \end{equation*}
On these coordinates, we also have $\eta(e,-)=\eta_{12}{\rm d}t+\sum_{i=2}^{n}\eta_{ii}{\rm d}u_{i}$ but by Lemma~\ref{prop:Closedness} this form is closed and therefore $\eta_{12,2}-\eta_{22,1}=0$.
\end{proof}

\section{Stokes and connection matrices at the caustic}
\label{sec:StokesMatrices}
In this section, we state Sibuya's theorem asserting that, for all $\nu\in\mathbb{Z}$ there exist appropriate sectors $S_{\nu}$ and holomorphic solutions $Y_{\nu}$ of~(\ref{eq:TheDifferentialEquation}) having asymptotic expansion~(\ref{eq:FormalSolution}) on $S_{\nu}$. Then we show that the Stokes matrices of the fundamental matrix solutions $Y_{\nu}$ are constant. We also show that the connection matrix $C$ relating the matrix $Y_{0}$ with the Levelt fundamental matrix solution $Y_{\rm L}$ is constant and therefore system~(\ref{eq:TheDifferentialEquation}) is strongly isomonodromic.

First, we define the sectors $S_{\nu}$. The gauge transformation~(\ref{eq:FormalGauge}) is usually divergent, but there are certain sectors $S_{\nu}$ of the $z$-plane in which this formal power series is the asymptotic expansion of a holomorphic gauge transformation which takes equation~(\ref{eq:TheDifferentialEquation}) to the block diagonal equation~(\ref{eq:IntermidateSystem}).
\begin{Definition}
 A line $\ell$ through the origin of the $z$-plane is called \emph{admissible} for the system~(\ref{eq:TheDifferentialEquation}) if for all $z\in\ell\setminus\{ 0 \}$, we have that $\mathrm{Re}(z(u_{i}-u_{j}))\neq 0$ whenever $u_{i}-u_{j}\neq 0$. Let $\phi$ be the oriented angle between the positive real axis and an admissible line $\ell$. For $\epsilon$ sufficiently small and $\nu\in\mathbb{Z}$, we define sectors $S_{\nu}$ of opening angle $\pi+2\epsilon$ by
 \begin{equation*}
 S_{0}:=\{z\in\mathbb{C}\mid \arg(z)\in(\phi-\pi-\epsilon,\phi+\epsilon)\},\qquad
 S_{\nu}:={\rm e}^{{\rm i}\nu\pi}S_{0}.
 \end{equation*}
Note that the intersection of two subsequent sectors has opening angle $2\epsilon$.
\end{Definition}

In the following, $u$ denotes a parameter on a small domain $D\subset\mathbb{C}^{n-1}$, for the applications we have in mind $u=(u_{2},\dots,u_{n})$ are the canonical coordinates on a sufficiently small open set of the caustic.
\begin{Theorem}[Sibuya~\cite{Sibuya}]
 Let $A(z,u)=\sum_{k=0}^{\infty}A_{k}(u)z^{-k}$ with $A_{k}\in \operatorname{Mat}_{n}(\mathcal{O}_{\mathbb{C}^{n-1}})$ be holomorphic on $\{ z\in\mathbb{C} \mid |z| \geq N_{0}>0 \}\times\{ |u|\leq\epsilon_{0} \}$ for some $N_{0}\in\mathbb{N}_{>0},\epsilon_{0}\in\mathbb{R}_{+}$ and such that $A_{0}(u)=\Lambda(u)=\Lambda_{1}\oplus\cdots\oplus\Lambda_{s}$ is diagonal with $s\leq n$ distinct eigenvalues \big(each matrix $\Lambda_{i}$ is diagonal $n_{i}\times n_{i}$ matrix with only one eigenvalue and $\sum n_{i}=n$\big). Then, for any proper subsector~$\bar{S}(\alpha,\beta)$ of $S_{\nu}$ there exist positive numbers $N\geq N_{0}$, $\epsilon\leq\epsilon_{0}$ and a matrix $G(z,u)$ with the following properties:
 \begin{enumerate}\itemsep=0pt
 \item[$1.$] $G(z,u)$ is holomorphic in $(z,u)$ for $|z|\geq N$, $z\in\bar{S}(\alpha,\beta)$ and $|u|\leq\epsilon$.
 \item[$2.$] $G(z,u)$ has uniform asymptotic expansion for $|u|\leq\epsilon$ with holomorphic coefficients $G_{k}(u)$,
 \begin{equation*}
 G(z,u)\sim {\rm Id}+\sum_{k=1}^{\infty}G_{k}(u)z^{-k},\qquad z\rightarrow\infty,\quad z\in\bar{S}(\alpha,\beta),
 \end{equation*}
 where the matrices $G_{k}(u)$ are computed from~\eqref{eq:GRecursiveRelations}
 \item[$3.$] The gauge transformation $Y(z,u)=G(z,u)\tilde{Y}(z,u)$ reduces the system $\tfrac{{\rm d}Y}{{\rm d}z}=AY$ to block diagonal form
 \begin{equation*}
 \frac{{\rm d}\tilde{Y}}{{\rm d}z}=\tilde{B}(z,u)\tilde{Y},\qquad \tilde{B}(z,u)=\tilde{B}_{1}(z,u)\oplus\cdots\oplus \tilde{B}_{s}(z,u)
 \end{equation*}
 and $\tilde{B}$ has uniform asymptotic expansion for $|u|\leq\epsilon$ with holomorphic coefficients $B_{k}(u)$
 \begin{equation*}
 \tilde{B}(z,u)\sim\Lambda(u)+\sum_{k=1}^{\infty}B_{k}(u)z^{-k},\qquad z\rightarrow\infty,\quad z\in\bar{S}(\alpha,\beta).
 \end{equation*}
 \end{enumerate}
\end{Theorem}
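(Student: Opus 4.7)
The plan is to establish the theorem in two stages: first a formal-algebraic reduction producing the power series $G_k(u)$ and $B_k(u)$, then an analytic realization of that formal reduction on each sector $S_\nu$.

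For the formal step, I would substitute the ansatz $G(z,u)={\rm Id}+\sum_{k\geq 1}G_k(u)z^{-k}$ into the conjugation equation $\tfrac{{\rm d}G}{{\rm d}z}=AG-G\tilde{B}$ with $\tilde{B}=\Lambda+\sum_{k\geq 1}B_k z^{-k}$ block diagonal, and compare powers of $z^{-k}$. This reproduces the recursion~(\ref{eq:GRecursiveRelations}) (with $-U$ replaced by $\Lambda$ and $V$ by $A_1$). Because the blocks $\Lambda_i$ have pairwise disjoint spectra, the adjoint operator $[\Lambda(u),\,\cdot\,]$ vanishes on block-diagonal matrices and is invertible on the off-block-diagonal part; this uniquely determines the off-block-diagonal part of $G_k$, forces $B_k$ to be block diagonal, and fixes the block-diagonal part of $G_k$ by the normalization convention (one can set it to zero). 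Holomorphy of $G_k(u)$ and $B_k(u)$ in $u$ follows because the inverse of $[\Lambda(u),\,\cdot\,]$ depends holomorphically on $u$ as long as the block spectra remain disjoint, which is preserved on a neighborhood of $0$ by continuity of eigenvalues.

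For the analytic step, the series $\sum G_k(u)z^{-k}$ is generically divergent of Gevrey type $1$, so one must construct a genuine holomorphic $G(z,u)$ admitting it as uniform asymptotic expansion on each proper subsector $\bar{S}(\alpha,\beta)\subset S_\nu$. Fix a large truncation $N$, set $G^{(N)}(z,u)={\rm Id}+\sum_{k=1}^{N}G_k(u)z^{-k}$, and write $G=G^{(N)}+R$. The differential equation $\tfrac{{\rm d}G}{{\rm d}z}=AG-G\tilde{B}$ then becomes an equation for $R$ whose inhomogeneous term is $O(z^{-N-1})$. Integrating block-by-block along rays tending to $\infty$ inside the sector converts this into a fixed-point equation $R=\mathcal{T}(R)$ with kernel involving exponentials $\exp\bigl((\lambda_i-\lambda_j)(z-\zeta)\bigr)$. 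A contraction-mapping argument in a Banach space of matrices holomorphic on the truncated subsector, normed by $\sup|z^{N+1}R(z,u)|$ uniformly in $|u|\leq\epsilon$, produces a unique solution with the prescribed decay. Holomorphy of $G$ in $u$ follows from Morera's theorem applied to the integral representation, and Poincar\'e's asymptotic lemma identifies the formal series as the actual asymptotic expansion; the block-diagonal structure of $\tilde{B}$ then follows by substitution.

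The main obstacle will be choosing, for each pair $(i,j)$ with $\lambda_i\neq \lambda_j$, an integration ray along which $\exp\bigl((\lambda_i-\lambda_j)(z-\zeta)\bigr)$ genuinely decays, so that the integrals defining $\mathcal{T}$ converge and $\mathcal{T}$ contracts. Admissibility of the central line $\ell$ together with the sector width $\pi+2\epsilon$ is exactly what makes this simultaneous choice possible: inside $S_\nu$ one can select for every unordered pair $\{i,j\}$ a direction in which $\mathrm{Re}\bigl((\lambda_i-\lambda_j)z\bigr)$ has the sign required by the integration prescription, and the excess $2\epsilon$ provides the cushion needed to keep estimates uniform on a proper subsector. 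Uniformity in $u$ forces shrinking $\epsilon_0$ to $\epsilon$ so that no pair of eigenvalues of distinct blocks of $\Lambda(u)$ crosses an admissibility Stokes line as $u$ varies; once this is arranged, the contraction estimates, asymptotic identification and holomorphic dependence on $u$ proceed uniformly and the three conclusions of the theorem follow.
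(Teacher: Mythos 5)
This statement is not proved in the paper at all: it is imported verbatim from Sibuya's 1962 article and used as a black box, so there is no internal proof to compare your argument against. That said, your two-stage outline is essentially Sibuya's own strategy and the standard one in the literature (see also Wasow): a formal block-diagonalization driven by the invertibility of $\operatorname{ad}\Lambda(u)$ on the off-block-diagonal part (which correctly reproduces the recursion~(\ref{eq:GRecursiveRelations}) and forces each $B_k$ to be block diagonal, with holomorphy in $u$ preserved as long as the block spectra stay separated), followed by a sectorial realization via an integral equation solved by contraction in a weighted sup norm. Your identification of the choice of integration rays, governed by the admissibility of $\ell$, as the crux is also the right diagnosis. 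Two points are under-specified in your sketch and are where the real work lies. First, the nontrivial content of Sibuya's theorem is that the opening of $S_\nu$ exceeds $\pi$: for each pair $(i,j)$ the exponential $\exp\bigl((\lambda_i-\lambda_j)(z-\zeta)\bigr)$ necessarily changes the sign of its real part inside the sector, so the integration path must be chosen entrywise (from $\infty$ in a pair-dependent direction, staying in the closed subsector with $\operatorname{Re}\bigl((\lambda_i-\lambda_j)(z-\zeta)\bigr)\leq 0$ along it), and one must verify such paths exist for every $z$ in the proper subsector; this is where the restriction to a \emph{proper} subsector and the cushion $2\epsilon$ enter quantitatively. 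Second, you construct a solution $R=R_N$ for each truncation level $N$ but do not argue that these agree, which is needed to conclude that the single matrix $G$ admits the full series as asymptotic expansion; this follows from uniqueness of the normalized block-diagonalizing transformation on sectors of opening greater than $\pi$ (any bounded homogeneous solution must vanish because each nonconstant exponential blows up somewhere in the sector), and that uniqueness argument deserves to be made explicit. With those two points filled in, your proposal is a correct reconstruction of the cited proof.
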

Now we apply this theorem to the matrix $A=-U+\mu z^{-1}$, $A_{0}=-U$ of system~(\ref{eq:TheDifferentialEquation}) restricted to the caustic $K$, thus we see that the formal gauge transformation of~(\ref{eq:FormalGauge}) is asymptotic, in proper sectors $S_{\nu}$, to a holomorphic gauge transformation $G_{\nu}$ that takes system~(\ref{eq:TheDifferentialEquation}) to the block diagonal form~(\ref{eq:BlockDiagonalSystem}). We obtain holomorphic fundamental matrix solutions of system~(\ref{eq:TheDifferentialEquation}) of the form
\begin{equation}
 \label{eq:CanonicalHolomorphicSolution}
 Y_{\nu}=G_{\nu}Hz^{B}{\rm e}^{-Uz}:=\hat{Y}_{\nu}z^{B}{\rm e}^{-Uz}
\end{equation}
such that in the sector $S_{\nu}$ we have
\begin{equation}
 \label{eq:AsymptoticsOfHolomorphic}
 Y_{\nu}\sim Y_{F}=\bigl(H_{0}+(H_{0}H_{1}+G_{1}H_{0})z^{-1}+O\bigl(z^{-2}\bigr)\bigr){\rm e}^{-Uz}z^{B}.
\end{equation}
Stokes matrices are defined in the usual way. On the overlap of two adjacent sectors $S_{\nu}\cap S_{\nu+1}$, we have that
\begin{equation*}
 Y_{\nu+1}(z;u)=Y_{\nu}(z;u)\mathbb{S}_{\nu}(u).
\end{equation*}
The matrix $\mathbb{S}_{\nu}$ is called \emph{Stokes matrix}. Using the recursive relations of Section~\ref{sec:ExponentOfFormalMonodromy}, we can compute formal solutions of equation~(\ref{eq:TheDifferentialEquation}). Now we proceed to show that if we choose $H_{0}(u)$ in a~particular way, then the corresponding holomorphic solutions are $j^{*}\nabla$-flat and the matrices~$\mathbb{S}_{\nu}(u)$ are independent of the parameter $u$. In the following we denote by $d$ the differential only with respect to the $u$ variable excluding the $z$ variable.
\begin{Lemma}
 Consider $n$ $j^{*}\bar{\nabla}$-flat linearly independent sections $y_{1}(z;u),\dots,y_{n}(z;u)$ of the bundle $(\pi\circ j)^{*}\mathcal{T}_{M}$ and write them down on the basis $N,\pi_{2},\dots,\pi_{n}$. Let $Y$ be the $n\times n$ matrix having the sections $y_{i}$ as columns. Furthermore, let $Y_{F}$ be the formal solution~\eqref{eq:FormalSolution} of equation~\eqref{eq:TheDifferentialEquation} computed with $H_{0}$ as in Lemma~$\ref{lem:DiagonalizingV}$. Then $Y$ satisfies equation~\eqref{eq:TheDifferentialEquation} and on each sector $S_{\nu}$ there exists a constant matrix $C_{\nu}$ such that $Y=Y_{\nu}C_{\nu}$. In particular, $Y_{\nu}$ is $j^{*}\bar{\nabla}$-flat.
\end{Lemma}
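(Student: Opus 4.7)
The lemma contains three claims. First, $Y$ satisfies equation \eqref{eq:TheDifferentialEquation}: this is immediate from the $z$-component of $j^{*}\bar{\nabla}$-flatness, since in the basis $N,f_{2},\dots,f_{n}$ the $z$-connection matrix is $\bar{\omega}_{z}=U-V/z$, so each flat column $y_{i}$ satisfies $\partial_{z} y_{i}=(V/z-U)y_{i}$. Second, on the sector $S_{\nu}$ both $Y$ and $Y_{\nu}$ are fundamental matrix solutions of the same linear ODE in $z$, so there exists an invertible, $z$-independent matrix $C_{\nu}(u)$ with $Y(z,u)=Y_{\nu}(z,u)C_{\nu}(u)$. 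The substantive content is the third claim: that $C_{\nu}$ is independent of $u$, and, equivalently, that $Y_{\nu}$ is itself $j^{*}\bar{\nabla}$-flat.

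Differentiating $Y=Y_{\nu}C_{\nu}$ with respect to $u_{i}$ and substituting $\partial_{i}Y=-\bar{\omega}_{i}Y$ yields
\begin{equation*}
Y_{\nu}\partial_{i}C_{\nu}=-(\partial_{i}Y_{\nu}+\bar{\omega}_{i}Y_{\nu})C_{\nu},
\end{equation*}
so it suffices to prove $\partial_{i}Y_{\nu}+\bar{\omega}_{i}Y_{\nu}=0$. A direct calculation using the compatibility identities \eqref{eq:UsefulIdentities1}--\eqref{eq:UsefulIdentities3} together with the ODE \eqref{eq:TheDifferentialEquation} for $Y_{\nu}$ shows that $\partial_{i}Y_{\nu}+\bar{\omega}_{i}Y_{\nu}$ satisfies the same linear ODE in $z$ as $Y_{\nu}$ itself; consequently $\Xi_{i}(u):=Y_{\nu}^{-1}(\partial_{i}Y_{\nu}+\bar{\omega}_{i}Y_{\nu})$ is independent of $z$ and may be evaluated by taking the asymptotic limit $z\to\infty$ inside $S_{\nu}$, using $Y_{\nu}\sim(H_{0}+O(z^{-1}))z^{B}{\rm e}^{-Uz}$ from \eqref{eq:FormalSolution}.

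The asymptotic analysis has two ingredients. First, boundedness of ${\rm e}^{Uz}\Xi_{i}{\rm e}^{-Uz}$ forces $\Xi_{i}$ into the centralizer of $U$, i.e., supported on the $2\times 2$ top-left block and on the diagonal; the non-resonance $2{\rm i}V^{1}_{2}\notin\mathbb{Z}\setminus\{0\}$ from Proposition~\ref{prop:ExponentMultiplication}, together with the factor $z^{B}\Xi_{i}z^{-B}$, then rules out the off-diagonal $2\times 2$ entries, so that $\Xi_{i}$ is diagonal. Second, matching the surviving diagonal entries against the $z^{0}$ coefficient of $\hat{Y}_{\nu}^{-1}(\partial_{i}\hat{Y}_{\nu}+\omega_{i}\hat{Y}_{\nu}+z[E_{i},\hat{Y}_{\nu}])$ reduces the identity $\Xi_{i}=0$ to $\partial_{i}H_{0}+\omega_{i}H_{0}+[E_{i},\hat{Y}_{1}]=0$, where $\hat{Y}_{1}=H_{0}H_{1}+G_{1}H_{0}$ is the subleading coefficient of the formal solution. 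Here the choice of $H_{0}$ from Lemma~\ref{lem:DiagonalizingV} is decisive: its $({\rm id}\otimes\pi_{2}\circ)i^{*}\nabla$-flatness kills the $\omega_{i}$ contribution inside the $2\times 2$ block (leaving $(\omega_{i}-E_{2}\omega_{i}E_{2})H_{0}$), while the off-block entries are absorbed by $[E_{i},\hat{Y}_{1}]$ via the recursive definitions of $G_{1}$ and $H_{1}$. The main obstacle is this final bookkeeping step, ensuring that the remaining entries cancel block-by-block; once $\Xi_{i}=0$ is confirmed, $\partial_{i}C_{\nu}=0$ follows, and $j^{*}\bar{\nabla}$-flatness of $Y_{\nu}$ is a corollary.
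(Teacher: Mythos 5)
Your proposal is correct and follows essentially the same route as the paper: identify ${\rm d}C_{\nu}\cdot C_{\nu}^{-1}$ (your $-\Xi_{i}$) as $z$-independent, show the $z^{0}$ term of $-\bar{\omega}_{u}-{\rm d}Y_{\nu}\cdot Y_{\nu}^{-1}$ cancels via the explicit form of $G_{1}$, the identities \eqref{eq:UsefulIdentities1}--\eqref{eq:UsefulIdentities3} and the choice of $H_{0}$ from Lemma~\ref{lem:DiagonalizingV}, then eliminate the entries of the constant matrix block by block from the asymptotics of ${\rm e}^{-Uz}z^{B}(\cdot)z^{-B}{\rm e}^{Uz}$. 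Two points you leave implicit that the paper makes explicit: killing the off-centralizer entries uses that $S_{\nu}$ has opening greater than $\pi$ (so it meets a direction where ${\rm e}^{(u_{\beta}-u_{\alpha})z}$ grows), and the $2\times 2$ off-diagonal entries die because $0<\tfrac{m-2}{m}<1$ makes $z^{\pm\frac{m-2}{m}}$ incompatible with the $O\bigl(z^{-1}\bigr)$ decay, rather than by non-resonance per se.
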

\begin{proof}
 The fact that $Y$ satisfies equation~(\ref{eq:TheDifferentialEquation}) is obvious. Therefore, there exists a holomorphic matrix $C_{\nu}=C_{\nu}(u)$ such that $Y=Y_{\nu}C_{\nu}$ so we just need to show that ${\rm d}C_{\nu}=0$. Let $\bar{\omega}_{u}$ be the connection form of the connection $j^{*}\bar{\nabla}$ disregarding the ${\rm d}z$ component. By the definition of $Y$, we have $-\bar{\omega}_{u}={\rm d}Y\cdot Y^{-1}$, so
 \begin{equation*}
 -\omega_{u}-{\rm d}Y_{\nu}\cdot Y_{\nu}^{-1}=Y_{\nu}{\rm d}C_{\nu}\cdot C_{\nu}^{-1}Y_{\nu}^{-1}.
 \end{equation*}
 By Proposition~\ref{prop:ExponentMultiplication}, we have ${\rm d}B=0$. The asymptotic expansion~(\ref{eq:AsymptoticsOfHolomorphic}) in the sector $S_{\nu}$ and the block structure of $B$ and $U$ tell us that
 \begin{align}
 Y_{\nu}{\rm d} C_{\nu}\cdot C_{\nu}^{-1}Y_{\nu}^{-1}& =-\bar{\omega}_{u}-{\rm d}Y_{\nu}\cdot Y_{\nu}^{-1}\nonumber\\
 & \sim-\bar{\omega}_{u} +z{\rm d}U-[{\rm d}U,G_{1}]-{\rm d}H_{0}\cdot H_{0}^{-1}+O\bigl(z^{-1}\bigr).
 \label{eq:lem1}
 \end{align}
Looking at equation~(\ref{eq:ConnectionForm}), we get
\begin{equation*}
 -\bar{\omega}_{u}+z{\rm d}U=-\sum_{i=2}^{n}\omega_{i}{\rm d}u_{i}.
\end{equation*}
Now suppose that $1\neq\alpha\neq 2$ or $1\neq\beta\neq 2$, from equation~(\ref{eq:UsefulIdentities3})
\begin{equation*}
 \sum_{i=2}^{n}(\omega_{i})^{\alpha}_{\beta}{\rm d}u_{i}=-\frac{{\rm d}u_{\alpha}-{\rm d}u_{\beta}}{u_{\alpha}-u_{\beta}}V^{\alpha}_{\beta}.
\end{equation*}
On the other hand, from equation~(\ref{eq:G1Equation}), we get
\begin{equation*}
 [{\rm d}U,G_{1}]^{\alpha}_{\beta}=\frac{{\rm d}u_{\alpha}-{\rm d}u_{\beta}}{u_{\alpha}-u_{\beta}}V^{\alpha}_{\beta}.
\end{equation*}
Substituting these last three equations in~(\ref{eq:lem1}) gives
\begin{equation*}
 Y_{\nu}{\rm d}C_{\nu}\cdot C_{\nu}^{-1}Y_{\nu}^{-1}\sim-\sum_{i=2}^{n}E_{2}\omega_{i}E_{2}{\rm d}u_{i}-{\rm d}H_{0}\cdot H_{0}^{-1},
\end{equation*}
where the matrix $E_{2}$ has entries $(E_{2})^{\alpha}_{\beta}=\delta^{\alpha}_{1}\delta^{1}_{\beta}+\delta^{\alpha}_{2}\delta^{2}_{\beta}$. By Lemma~\ref{lem:DiagonalizingV}, we get that in the sector~$S_{\nu}$
\begin{equation*}
 Y_{\nu}({\rm d}C_{\nu}\cdot C_{\nu})Y_{\nu}^{-1}\sim O\bigl(z^{-1}\bigr).
\end{equation*}
 Let us write
 \begin{equation*}
 Y_{\nu}{\rm d}C_{\nu}\cdot C_{\nu}^{-1}Y_{\nu}^{-1}\sim\sum_{k=1}^{\infty}F_{k}z^{-k}=:F_{\nu}.
 \end{equation*}
 Using~(\ref{eq:CanonicalHolomorphicSolution}) on $\mathcal{S}_{\nu}$, we have
 \begin{equation*}
 {\rm e}^{-zU}z^{B}{\rm d} C_{\nu}\cdot C_{\nu}^{-1}z^{-B}{\rm e}^{Uz}\sim \hat{Y}_{\nu}^{-1}F_{\nu}\hat{Y}_{\nu}.
 \end{equation*}
 Note that since the matrix $\hat{Y}_{\nu}$ is holomorphic on $z=\infty$, the term $\hat{Y}_{\nu}^{-1}F_{\nu}\hat{Y}_{\nu}$ vanishes as $z^{-1}$ when $z\rightarrow\infty$.

 Let us denote $B=\operatorname{diag}(b_{1},\dots,b_{n})=\operatorname{diag}\bigl(\tfrac{m-2}{2m},-\tfrac{m-2}{2m},0,\dots,0\bigr)$ then, since both ${\rm e}^{Uz}$ and $z^{B}$ are diagonal matrices, we have
 \begin{equation}
 \label{eq:ImportantAsymptotics}
 O\bigl(z^{-1}\bigr)\sim\bigl({\rm e}^{-zU}z^{B}{\rm d} C_{\nu}\cdot C_{\nu}^{-1}z^{-B}{\rm e}^{Uz}\bigr)^{\alpha}_{\beta}={\rm e}^{u_{\beta}-u_{\alpha}}z^{b_{\alpha}-b_{\beta}}\bigl({\rm d}C_{\nu}\cdot C_{\nu}^{-1}\bigr)^{\alpha}_{\beta}.
 \end{equation}
 Let $1\neq\alpha\neq 2$ or $1\neq\beta\neq 2$ and $\alpha\neq\beta$, then ${\rm e}^{u_{\beta}-u_{\alpha}}\neq 0$. Since the sector $S_{\nu}$ has opening angle bigger than $\pi$ this sector intersects the line $\operatorname{Re}((u_{\beta}-u_{\alpha})z)=0$. On one side of this line the function ${\rm e}^{(u_{\beta}-u_{\alpha})z}$ diverges when $z\rightarrow\infty$. But the above expression must vanish as $z^{-1}$ when $z\rightarrow\infty$ so whenever $1\neq\alpha\neq 2$ or $1\neq\beta\neq 2$ and $\alpha\neq\beta$ we must have
 \begin{equation*}
 ({\rm d}C_{\nu}\cdot C_{\nu})^{\alpha}_{\beta}=0.
 \end{equation*}
 For $\alpha=\beta$ from~(\ref{eq:ImportantAsymptotics}), we get
 \begin{equation*}
 \bigl({\rm d}C_{\nu}\cdot C_{\nu}^{-1}\bigr)\sim O\bigl(z^{-1}\bigr).
 \end{equation*}
 Since this matrix does not depend on $z$, we again must have
 \begin{equation*}
 ({\rm d}C_{\nu}\cdot C_{\nu})^{\alpha}_{\beta}=0.
 \end{equation*}
 Finally, for $\alpha,\beta\in\{ 1,2 \}$ and $\alpha\neq\beta$ from~(\ref{eq:ImportantAsymptotics}), we get
 \begin{equation*}
 O\bigl(z^{-1}\bigr)\sim z^{\pm\tfrac{m-2}{m}}\bigl({\rm d}C_{\nu}\cdot C_{\nu}^{-1}\bigr)^{\alpha}_{\beta},
 \end{equation*}
 and again we conclude
 \begin{equation*}
 \bigl({\rm d}C_{\nu}\cdot C_{\nu}^{-1}\bigr)^{\alpha}_{\beta}=0.
 \end{equation*}
 Therefore, $\bigl({\rm d}C_{\nu}\cdot C_{\nu}^{-1}\bigr)=0$ and the matrix $C_{\nu}$ is constant.
\end{proof}

\begin{Proposition}
 \label{prop:ConstantStokes}
 The Stokes matrices associated to the formal solution~\eqref{eq:FormalSolution} of equation~\eqref{eq:TheDifferentialEquation} are constant.
\end{Proposition}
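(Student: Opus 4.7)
The plan is to deduce the proposition quickly from the preceding lemma, which absorbs essentially all of the analytical work. The first observation is that each $Y_{\nu}$ is itself a $j^{*}\bar{\nabla}$-flat fundamental matrix. To see this, pick any $j^{*}\bar{\nabla}$-flat fundamental matrix $Y$ of equation~(\ref{eq:TheDifferentialEquation}) (which exists by flatness of $j^{*}\bar{\nabla}$), and apply the preceding lemma on $S_{\nu}$ to obtain a constant matrix $C_{\nu}$ with $Y=Y_{\nu}C_{\nu}$. Inverting, $Y_{\nu}=Y\,C_{\nu}^{-1}$, and since $C_{\nu}^{-1}$ is constant in $u$, the matrix $Y_{\nu}$ inherits $j^{*}\bar{\nabla}$-flatness from $Y$.

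Next, on the overlap $S_{\nu}\cap S_{\nu+1}$ I would apply the differential ${\rm d}$ in the $u$-variables only to the defining identity $Y_{\nu+1}(z;u)=Y_{\nu}(z;u)\mathbb{S}_{\nu}(u)$. Writing $\bar{\omega}_{u}$ for the $u$-part of the connection form of $j^{*}\bar{\nabla}$, the flatness of $Y_{\nu}$ and $Y_{\nu+1}$ gives ${\rm d}Y_{\nu}=-\bar{\omega}_{u}Y_{\nu}$ and ${\rm d}Y_{\nu+1}=-\bar{\omega}_{u}Y_{\nu+1}$. Substituting into the Leibniz-expanded identity
\begin{equation*}
 {\rm d}Y_{\nu+1}={\rm d}Y_{\nu}\cdot\mathbb{S}_{\nu}+Y_{\nu}\cdot{\rm d}\mathbb{S}_{\nu},
\end{equation*}
the terms involving $\bar{\omega}_{u}$ cancel and we are left with $Y_{\nu}\,{\rm d}\mathbb{S}_{\nu}=0$. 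Invertibility of $Y_{\nu}$ then forces ${\rm d}\mathbb{S}_{\nu}=0$, i.e., $\mathbb{S}_{\nu}$ is constant in $u$.

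The main obstacle has essentially been dispatched already by the preceding lemma, whose subtle asymptotic-expansion argument exploited the specific choice of $H_{0}$ from Lemma~\ref{lem:DiagonalizingV} to pin down the $j^{*}\bar{\nabla}$-flatness of $Y_{\nu}$. Given that lemma, the present proposition reduces to the elementary fact that two $j^{*}\bar{\nabla}$-flat fundamental solutions of the same flat connection must differ by a matrix that is constant in $u$. The only minor point to check is that the overlap $S_{\nu}\cap S_{\nu+1}$, which has opening angle $2\epsilon$, is non-empty and connected so that $\mathbb{S}_{\nu}(u)$ is unambiguously defined and the conclusion ${\rm d}\mathbb{S}_{\nu}=0$ indeed gives a single constant matrix over the parameter domain.
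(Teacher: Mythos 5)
Your argument is correct and is essentially the paper's own proof: the preceding lemma already records that each $Y_{\nu}$ is $j^{*}\bar{\nabla}$-flat (your re-derivation of this via $Y_{\nu}=Y C_{\nu}^{-1}$ is exactly how that ``in particular'' is justified), and the paper then differentiates $Y_{\nu+1}=Y_{\nu}\mathbb{S}_{\nu}$ and cancels the connection terms just as you do. The only cosmetic difference is a sign convention for the connection form, which does not affect the cancellation.
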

\begin{proof}
 By the previous lemma, for all $\nu\in\mathbb{Z}$ we have that ${\rm d}Y_{\nu}=\omega Y_{\nu}$. We also have that $Y_{\nu+1}=Y_{\nu}\mathbb{S}_{\nu}$, so that
 \begin{align*}
 {\rm d}Y_{\nu+1} =\omega Y_{\nu}\mathbb{S}_{\nu}+Y_{\nu}{\rm d}\mathbb{S}_{\nu} =\omega Y_{\nu+1}+Y_{\nu}{\rm d}\mathbb{S}_{\nu}.
 \end{align*}
 Since ${\rm d}Y_{\nu+1}=\omega Y_{\nu+1}$, we conclude ${\rm d}\mathbb{S}_{\nu}=0$ for all $\nu\in\mathbb{Z}$.
\end{proof}
\begin{Proposition}
 \label{ConstantConnectionMatrix}
 Let $C=C(u)$ be the matrix relating the fundamental matrix solution $Y_{\rm L}$ in Levelt form~\eqref{eq:LeveltSolution} around $z=0$ and the fundamental matrix solution $Y_{0}$ having asymptotic expansion~\eqref{eq:FormalSolution} on $S_{0}$. Then ${\rm d}C=0$.
\end{Proposition}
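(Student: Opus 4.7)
The strategy is exactly parallel to that of Proposition~\ref{prop:ConstantStokes}: reduce the claim to the $j^{*}\bar{\nabla}$-flatness of the Levelt solution $Y_{\rm L}$ in the $u$-directions, and then produce a $Y_{\rm L}$ having this property.

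First, from $Y_{0} = Y_{\rm L} C$ together with ${\rm d}Y_{0} = \omega Y_{0}$ (established in the Lemma preceding Proposition~\ref{prop:ConstantStokes}), a direct computation gives
\begin{equation*}
 Y_{\rm L}\,{\rm d}C=\left(\omega Y_{\rm L}-{\rm d}Y_{\rm L}\right) C,
\end{equation*}
so ${\rm d}C = 0$ is equivalent to ${\rm d}Y_{\rm L} = \omega Y_{\rm L}$.

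To produce a $j^{*}\bar{\nabla}$-flat $Y_{\rm L}$, fix a base point $u_{0} \in K$, let $Y_{\rm L}^{(u_{0})}(z)$ be a Levelt solution at $u_{0}$, and set $C_{0} := \bigl(Y_{\rm L}^{(u_{0})}\bigr)^{-1} Y_{0}(z;u_{0})$, which is a constant matrix. Define
\begin{equation*}
 Y_{\rm L}(z;u):=Y_{0}(z;u)\,C_{0}^{-1}.
\end{equation*}
Then $Y_{\rm L}$ is $j^{*}\bar{\nabla}$-flat (since $Y_{0}$ is and $C_{0}$ is constant), solves the $z$-equation at every $u$, and coincides with $Y_{\rm L}^{(u_{0})}$ at $u=u_{0}$. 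Because $\bar{\nabla}$-flat transport commutes with monodromy around $z=0$, the monodromy of $Y_{0}(z;u)$ is independent of $u$, and hence so is the monodromy of $Y_{\rm L}(z;u)=Y_{0}(z;u)C_{0}^{-1}$; at $u_{0}$ the latter equals $\tilde{M}={\rm e}^{2\pi {\rm i}(R+S)}$, so it equals $\tilde{M}$ for every $u$. Combined with the constancy of the Jordan form of $\mu(u)$ recalled in Section~\ref{sec:Monodromy}, the regular-singular theory at $z=0$ then gives that
\[
 T(z;u):= Y_{\rm L}(z;u)\,z^{-(R+S)}\,z^{-D}
\]
extends holomorphically to $z=0$, so $Y_{\rm L}(z;u)$ is in the Levelt form~\eqref{eq:LeveltSolution} for every $u$. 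For this choice of $Y_{\rm L}$, $C(u)\equiv C_{0}$, hence ${\rm d}C = 0$. Any other Levelt representative differs from this one by right multiplication by a constant matrix commuting with $\tilde{M}$, which leaves ${\rm d}C = 0$ unaffected.

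The main technical point is verifying that $T(z;u)$ obtained via $\bar{\nabla}$-flat transport remains holomorphic at $z=0$ uniformly in $u$. Once the constancy of the Jordan form of $\mu$ is invoked, this is the standard holomorphic-dependence theorem for Fuchsian systems, but it is the step where the global flatness of $\bar{\nabla}$ on $\mathbb{P}^{1}\times K$ feeds back nontrivially into the pointwise Levelt normalization.
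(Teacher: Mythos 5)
Your proposal is correct, and its final mechanism is exactly the paper's two-line argument: since ${\rm d}Y_{0}=\omega Y_{0}$, the constancy of $C$ is equivalent to ${\rm d}Y_{\rm L}=\omega Y_{\rm L}$, i.e., to the $u$-flatness of the Levelt solution. The difference is that the paper simply takes this flatness as known (it is part of the statement, recalled in Section~\ref{sec:Monodromy} from Dubrovin's work, that the Levelt data at $z=0$ are isomonodromic), whereas you re-derive it by parallel transport of $Y_{0}$ from a base point. That extra work is legitimate and self-contained in spirit, but the step you flag as ``the main technical point'' is the whole content, and the appeal to ``the standard holomorphic-dependence theorem for Fuchsian systems'' does not by itself close it: an arbitrary holomorphic family of fundamental solutions of a Fuchsian family has no reason to have constant connection matrix. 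What actually makes your transport argument work is the explicit form~\eqref{eq:ConnectionForm} of the connection: the $u$-part of $\bar{\omega}$ is $z\,{\rm d}U+\sum_{i}\omega_{i}\,{\rm d}u_{i}$, which is holomorphic (indeed polynomial) in $z$ at $z=0$. Hence the $u$-parallel-transport operator $P(z;u)$, defined by $\hat{Y}(z;u)=P(z;u)\hat{Y}(z;u_{0})$, extends holomorphically and invertibly to a full neighborhood of $z=0$, so $T(z;u)=P(z;u)T(z;u_{0})$ is holomorphic and invertible there and the exponents $D$, $R$, $S$ in~\eqref{eq:LeveltSolution} are untouched; you should say this explicitly rather than gesture at general theory. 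One further caveat: your closing remark that any other Levelt representative differs by a \emph{constant} matrix commuting with $\tilde{M}$ is only true if that representative is itself required to depend on $u$ flatly (or at least holomorphically with a fixed normalization of $T_{0}$); a family $Y_{\rm L}(z;u)K(u)$ with $K(u)$ non-constant in the isotropy group of the Levelt form is still ``in Levelt form'' pointwise but has non-constant $C$. The proposition, like the paper, implicitly refers to the specific flat Levelt solution of Section~\ref{sec:Monodromy}.
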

\begin{proof}
 We have that $Y_{\rm L}=Y_{0}C$ and both $Y_{\rm L}$ and $Y_{0}$ satisfy ${\rm d}Y_{*}=\omega Y_{*}$. Hence
 \begin{align*}
 \omega Y_{\rm L} ={\rm d}Y_{0}\cdot C+Y_{0}{\rm d} C =\omega Y_{\rm L}+Y_{0}{\rm d}C.
 \end{align*}
 Hence ${\rm d}C=0$.
\end{proof}

The fact that the monodromy of the Levelt fundamental solution~(\ref{eq:LeveltSolution}) is constant and the Propositions~\ref{prop:ExponentMultiplication},~\ref{prop:ConstantStokes} and~\ref{ConstantConnectionMatrix} imply the following.
\begin{Theorem}
 \label{thm:Isomonodromy}
 Let $(M,\circ,e,E,\eta)$ be a Dubrovin--Frobenius manifold with non-empty caustic~$K$ and suppose that for a point $p\in K$ the germ of $M$ at $p$ as an $F$-manifold is isomorphic to $I_{2}(m)\times(A_{1})^{n-2}$ with $m\geq 3$. Then the fundamental matrix solutions $Y_{\rm L},Y_{\nu},\nu\in\mathbb{Z}$ have constant monodromy data.
 \end{Theorem}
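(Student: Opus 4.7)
The strategy is to observe that Theorem~\ref{thm:Isomonodromy} is an assembly statement: the collection of monodromy data associated to system~\eqref{eq:TheDifferentialEquation} along the caustic consists of four pieces, and each piece has already been shown to be constant in the preceding sections. The plan is therefore simply to enumerate these four pieces and cite the corresponding result.

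First, I would recall that the monodromy data associated with the Levelt fundamental matrix solution $Y_{\rm L}=Tz^{D}z^{R+S}$ near the regular singularity $z=0$ consists of the Jordan data of $\mu$ together with the monodromy matrix $\tilde{M}={\rm e}^{2\pi {\rm i}(R+S)}$. As recalled in Section~\ref{sec:Monodromy} (following~\cite{Dubrovin:1994hc}), this matrix is locally constant on $M$, and in particular its restriction to $K$ is constant. Next, for the irregular singularity at $z=\infty$, the exponent of formal monodromy $B=\operatorname{diag}\bigl({\rm i}V^{1}_{2},-{\rm i}V^{1}_{2},0,\dots,0\bigr)$ appearing in the formal solution~\eqref{eq:FormalSolution} is constant by Proposition~\ref{prop:ExponentMultiplication}, where it is explicitly identified with $\operatorname{diag}\bigl(\tfrac{m-2}{2m},-\tfrac{m-2}{2m},0,\dots,0\bigr)$ and hence depends only on the integer $m$ from Hertling's decomposition.

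Then I would invoke Proposition~\ref{prop:ConstantStokes} to conclude that the Stokes matrices $\mathbb{S}_{\nu}$ relating the holomorphic fundamental matrix solutions $Y_{\nu}$ and $Y_{\nu+1}$ given by Sibuya's theorem are independent of $u$, and Proposition~\ref{ConstantConnectionMatrix} to conclude that the central connection matrix $C$ with $Y_{\rm L}=Y_{0}C$ is also independent of $u$. Together, the four facts establish that every element of the tuple $(\tilde{M},B,\mathbb{S}_{\nu},C)$ is constant along $K$, which is exactly the content of strong isomonodromy.

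The only subtlety to flag in the write-up is that Propositions~\ref{prop:ConstantStokes} and~\ref{ConstantConnectionMatrix} require the specific normalization of $H_{0}$ given in Lemma~\ref{lem:DiagonalizingV}; without this choice the matrix $B$ would be constant but the individual solutions $Y_{\nu}$ would fail to be $j^{*}\bar{\nabla}$-flat, breaking the argument that ${\rm d}\mathbb{S}_{\nu}={\rm d}C=0$. Once this normalization is in place, the whole theorem reduces to a one-line assembly of the four preceding results, so no further obstacle arises.
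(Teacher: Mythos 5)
Your proposal is correct and follows exactly the paper's own argument: the theorem is proved there by the same one-line assembly of the constancy of the Levelt monodromy at $z=0$, Proposition~\ref{prop:ExponentMultiplication} for the exponent of formal monodromy, Proposition~\ref{prop:ConstantStokes} for the Stokes matrices, and Proposition~\ref{ConstantConnectionMatrix} for the central connection matrix. Your remark about the normalization of $H_{0}$ via Lemma~\ref{lem:DiagonalizingV} being the hypothesis that makes the latter two propositions applicable is also consistent with the paper's development.
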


 \section{Three-dimensional examples}
\label{sec:Examples}
In this section, we use Proposition~\ref{prop:ExponentMultiplication} to compute the decomposition $(M,p)\cong I_{2}(m)\times A_{1}$ for the three-dimensional polynomial massive Dubrovin--Frobenius manifolds.

Locally, using flat coordinates $(x,y,z)$, a Dubrovin--Frobenius manifold can be described by means of a single function $F$ called the potential. In the three-dimensional case, if we suppose that $F$ is polynomial and the Dubrovin--Frobenius manifold is massive, there are only three possibilities: the Dubrovin--Frobenius manifolds corresponding to the singularities~$A_{3}$,~$B_{3}$ and~$H_{3}$. The corresponding potentials are (see~\cite[Chapter~1, Example~1.4]{Dubrovin:1994hc})
\begin{gather*}
 F_{A}=\frac{1}{2}x^{2}z+\frac{1}{2}xy^{2}-\frac{1}{16}y^{2}z^{2}+\frac{1}{960}z^{5}, \\
 F_{B}=\frac{1}{2}x^{2}z+\frac{1}{2}xy^{2}+\frac{1}{6}y^{3}z+\frac{1}{6}y^{2}z^{3}+\frac{1}{210}z^{7}, \\
 F_{H}=\frac{1}{2}x^{2}z+\frac{1}{2}xy^{2}+\frac{1}{6}y^{3}z^{2}+\frac{1}{20}y^{2}z^{5}+\frac{1}{3960}z^{11}.
 \end{gather*}
In these coordinates, the metric $\eta$ is given by $\eta_{ij}=\tfrac{\partial F}{\partial t_{1}\partial t_{i}\partial t_{j}}$ (here we identify the indices $x\mapsto 1$, $y\mapsto 2$, $z\mapsto 3$) and in all three cases we get
\begin{equation*}
 \begin{pmatrix}
 0&0&1\\
 0&1&0\\
 1&0&0
 \end{pmatrix}.
\end{equation*}
In these coordinates, the structure constants of the multiplication $\circ$ are $c^{k}_{ij}=\sum_{s}\eta^{ks}\tfrac{\partial F}{\partial t_{s}\partial t_{i}\partial t_{j}}$. The corresponding Euler vector fields are
\begin{align*}
 & E_{A}=x\frac{\partial}{\partial x}+\frac{3}{4}y\frac{\partial}{\partial y}+\frac{1}{2}z\frac{\partial}{\partial z},\\
 & E_{B}=x\frac{\partial}{\partial x}+\frac{2}{3}y\frac{\partial}{\partial y}+\frac{1}{3}z\frac{\partial}{\partial z},\\
 & E_{H}=x\frac{\partial}{\partial x}+\frac{3}{5}y\frac{\partial}{\partial y}+\frac{1}{5}z\frac{\partial}{\partial z}.
\end{align*}

We will explicitly compute $m$ for the $H_{3}$ Dubrovin--Frobenius manifold, the other two cases are similar and much simpler. On the basis $\partial_{x}$, $\partial_{y}$, $\partial_{z}$ the operator of multiplication by the Euler vector field has the form
\begin{equation*}
 \begin{pmatrix}
 x & \frac{7}{10}yz\bigl(2y +z^{3}\bigr) & \frac{1}{20}\bigl(12y^{3}36y^{2}z^{3}+z^{9}\bigr)\vspace{1mm}\\
 \frac{3}{5}y&x+yz^{2}+\frac{1}{5}z^{5}&\frac{7}{10}yz\bigl(2y+z^{3}\bigr)\vspace{1mm} \\
 \frac{1}{5}z^{5}&\frac{3}{5}y&x
 \end{pmatrix}.
\end{equation*}
The discriminant of the characteristic polynomial of this matrix is a multiple of the polynomial
\begin{equation*}
 y^{2}\bigl(y-z^{3}\bigr)^{5}\bigl(27y+5z^{3}\bigr)^{3}.
\end{equation*}
Along this surface (called the \emph{bifurcation diagram}), multiplication by the Euler vector field has a repeated eigenvalue and therefore the caustic is contained in this surface. We can divide this surface into two parts, the caustic and the \emph{semisimple coalescence locus} (for more information about this locus see~\cite{MR4094756}). In order to identify the semisimple coalescence locus, we use the following lemma.
\begin{Lemma}
 Let $(M,\circ,e)$ be an $F$-manifold. Suppose that at a point $p\in M$ there exists a~vector~$v$ such that the operator $v\circ$ has different eigenvalues $u_{i}\neq u_{j}$ if $i\neq j$. Then $p$ is a~semisimple point.
\end{Lemma}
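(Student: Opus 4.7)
The plan is to exploit the commutativity and associativity of $\circ$ to simultaneously diagonalize multiplication by every tangent vector at $p$ in the eigenbasis of $v\circ$, and then use the unit $e$ to extract orthogonal idempotents.

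First I would diagonalize $v\circ$: by hypothesis $T_pM$ admits a basis $e_1,\dots,e_n$ with $v\circ e_i=u_ie_i$ and the $u_i$ pairwise distinct. For any vector $w\in T_pM$, associativity and commutativity of $\circ$ give $(w\circ)(v\circ)=(v\circ)(w\circ)$, so $w\circ$ preserves every eigenspace of $v\circ$. Since the eigenvalues are distinct these eigenspaces are one-dimensional, so each $e_i$ is an eigenvector of $w\circ$ as well. Applying this with $w=e_j$ gives $e_j\circ e_i=\alpha_{ji}\,e_i$ for some scalar $\alpha_{ji}$; by commutativity the same product equals $\alpha_{ij}\,e_j$. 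For $i\neq j$ the vectors $e_i,e_j$ are linearly independent, forcing $e_i\circ e_j=0$.

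Next I would show $e_i\circ e_i\neq 0$. Write $e_i\circ e_i=c_i e_i$ from the previous step, and expand the global unit as $e=\sum_k a_k e_k$ in this basis. The identity $e\circ e_i=e_i$ combined with $e_k\circ e_i=\delta_{ki}c_i e_i$ yields $a_ic_i=1$, so in particular $c_i\neq 0$. Setting $\pi_i:=e_i/c_i$ then gives $\pi_i\circ\pi_j=\delta_{ij}\pi_i$, exhibiting $n$ orthogonal idempotents in $T_pM$, which is exactly the semisimplicity condition from the definition recalled in Section~\ref{sec:FManifolds}.

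I do not expect a genuine obstacle: the argument is purely linear algebra inside the single algebra $T_pM$, and the integrability condition \eqref{eq:Compatibility} plays no role (only the pointwise algebra axioms and the existence of a unit are used). The only point to be a little careful with is the invocation of the unit $e$ to rule out the degenerate case $c_i=0$, which would otherwise make $e_i$ a nilpotent eigenvector of $v\circ$ and allow $T_pM$ to lie on the caustic despite $v\circ$ having simple spectrum.
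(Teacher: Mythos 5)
Your proof is correct and follows essentially the same route as the paper's: commutativity and associativity of $\circ$ force the eigenbasis of $v\circ$ to consist of simultaneous eigenvectors, giving $e_i\circ e_j=0$ for $i\neq j$ and $e_i\circ e_i=c_ie_i$, after which one rescales to obtain orthogonal idempotents. You are in fact slightly more careful than the paper at one point: the paper simply sets $\pi_i=e_i/\lambda_i$ without remarking that $\lambda_i\neq 0$, whereas your use of the unit $e$ to derive $a_ic_i=1$ cleanly rules out the degenerate nilpotent case.
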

\begin{proof}
 Let $v\circ e_{i}=u_{i}e_{i}$ then $v\circ(e_{i}\circ e_{i})=e_{i}\circ v\circ e_{i}=u_{i}e_{i}\circ e_{i}$ so that $e_{i}\circ e_{i}$ is an eigenvector of $v\circ$ with eigenvalue $u_{i}$. Since all eigenvalues are different, we obtain $e_{i}\circ e_{i}=\lambda_{i}e_{i}$ and $\pi_{i}:=\tfrac{e_{i}}{\lambda_{i}}$ satisfies $\pi_{i}\circ\pi_{i}=\pi_{i}$. Now $u_{i}(\pi_{i}\circ\pi_{j})=v\circ(\pi_{i}\circ\pi_{j})=u_{j}(\pi_{i}\circ\pi_{j})$ but since $u_{i}\neq u_{j}$, we obtain $\pi_{i}\circ\pi_{j}=0$.
\end{proof}

Along the first component of this surface $y=0$, multiplication by $\partial_{y}$ has three different eigenvalues and thus $y=0$ belongs to the semisimple coalescence locus. To identify the caustic note that if a point is semisimple, then the operator of multiplication by any tangent vector is diagonalizable, indeed the orthogonal idempotents are a basis of eigenvectors. Along the components $y=z^{3}$ and $y=-\tfrac{5}{27}z^{3}$, the operator of multiplication by $\partial_{y}$ is not diagonalizable and therefore the caustic is the union of this two components.

The component $y=z^{3}$ is parametrized by $x=r$, $y=s^{3}$, $z=s$ and the tangent space to this surface is generated by $\partial_{r}=e$ and $\partial_{s}=3s^{2}\partial_{y}+\partial_{z}$. In this basis, multiplication by $\partial_{s}$ has matrix
\begin{equation*}
 \begin{pmatrix}
 0&\frac{175}{4}s^{8}\vspace{1mm}\\
 1& 9s^{4}
 \end{pmatrix}.
\end{equation*}
The eigenvectors of this matrix are $e_{2}=-\tfrac{25}{2}s^{4}\partial_{x}+3s^{2}\partial_{y}+\partial_{z}$ and $e_{3}=\frac{7}{2}s^{4}\partial_{x}+3s^{2}\partial_{y}+\partial_{z}$. Along the caustic the tangent space decomposes as the direct sum of a two-dimensional and a~one-dimensional algebras. To identify the unit in each of this algebras we use the Euler vector field. In our previous notation, the eigenvalue associated with $\pi_{2}$ must have multiplicity two and that of $\pi_{3}$ has multiplicity one. Thus, we obtain $e=\pi_{2}+\pi_{3}=-\frac{1}{16s^{2}}e_{2}+\frac{1}{16s^{2}}e_{3}$ so the square lengths of $\pi_{2}$ and $\pi_{3}$ are $-\frac{1}{16s^{4}}$ and $\frac{1}{16s^{4}}$, respectively. The unitary normal is the vector $N=-3s^{2}\partial_{x}+\partial_{y}$ and therefore an orthonormal basis along this component of the caustic consists of the vectors
\[
 N=-3s^{2}\partial_{x}+\partial_{y},\qquad
 f_{2}={\rm i}4s^{2}\pi_{2},\qquad
 f_{3}=4s^{2}\pi_{3}.
\]
On the basis $\partial_{x}$, $\partial_{y}$, $\partial_{z}$, the endomorphism $\mu$ has matrix $\operatorname{diag}\bigl(-\tfrac{2}{5},0,\tfrac{2}{5}\bigr)$ and this gives
\begin{equation*}
 V_{2}^{1}=\eta(N,\mu f_{2})={\rm i}\frac{3}{10}.
\end{equation*}
Therefore, along the component $y=z^{3}$, we have $m=5$. We can parametrize the other component $y=-\tfrac{5}{27}z^{3}$ by $x=r$, $y=-\tfrac{5}{27}s^{3}$, $z=s$. An identical procedure now gives
\begin{equation*}
 m=3.
\end{equation*}
The cases of $B_{3}$ and $A_{3}$ are analogous and simpler. On the $B_{3}$ Dubrovin--Frobenius manifold, the matrix of the endomorphism $\mu$ is $\operatorname{diag}\bigl(-\tfrac{1}{3},0,\tfrac{1}{3}\bigr)$ and the bifurcation diagram has equation
\begin{equation*}
 y^{2}\bigl(2y-3z^{2}\bigr)^{4}\bigl(2y+z^{2}\bigr)^{3}.
\end{equation*}
Again $y=0$ corresponds to the semisimple coalescence locus and the other two components conform the caustic. On the component $\big\{2y-3z^{2}=0 \big\}$, we have $m=4$, and on the component $\big\{2y+z^{2}=0 \big\}$, we have $m=3$. Finally, the $A_{3}$ manifold has bifurcation diagram
\begin{equation*}
 y^{2}\bigl(27y^{2}+8z^{2}\bigr)
\end{equation*}
Once again $y=0$ is the semisimple coalescence locus and on the other component we have $m=3$.

\subsection*{Acknowledgements}
I would like to thank the referees for the useful comments and corrections that helped improve the readability and proofs of this work.

\pdfbookmark[1]{References}{ref}
\LastPageEnding

\end{document}